\documentclass[envcountsame]{llncs}

\pagestyle{plain}
\thispagestyle{plain}

\usepackage{epsfig}
\usepackage{latexsym}
\usepackage{amsmath}
\usepackage{amsfonts}
\usepackage{amssymb}
\usepackage{graphicx}
\usepackage{algorithm}
\usepackage{myalgorithmic}
\usepackage{enumitem}
\usepackage{changepage}
\usepackage{booktabs}
\usepackage{multirow}
\usepackage{array}
\usepackage{tikz}
\usepackage{float}
\usepackage{sectsty}
\usepackage[mathscr]{euscript}
\usepackage{color}

\chapterfont{\raggedright}

\newfloat{algorithm}{t}{lop}
\newcolumntype{C}[1]{>{\centering\arraybackslash}p{#1}}

\usetikzlibrary{arrows,shapes,automata,backgrounds,petri}
\usetikzlibrary{positioning,calc}
\tikzstyle{legendborder}=[rectangle, draw, black, rounded corners, thin, top color=white, text=black, minimum width=2.5cm]
\tikzstyle{legendnoborder}=[rectangle, draw, white, rounded corners, thin, top color=white, text=black, minimum width=2.5cm]
\tikzstyle{selected edge} = [draw,line width=1pt,purple]

\newcommand{\mc}{\mathcal}
\newcommand{\Oh}{{\mc{O}}}
\newcommand{\Ohstar}{\Oh^\star}

\newtheorem{clm}{Claim}

\makeatletter
\newcommand\xleftrightarrow[2][]{%
  \ext@arrow 9999{\longleftrightarrowfill@}{#1}{#2}}
\newcommand\longleftrightarrowfill@{%
  \arrowfill@\leftarrow\relbar\rightarrow}
\makeatother

\def\ruler{\leftrightarrow}
\def\reach{\leftrightarrow^*}
\def\tapeL{\mbox{\textdollar}}
\def\tapeR{\mbox{\rlap c{\footnotesize/}}}
\def\tapeE{\mbox{\textvisiblespace}\ }

\def\sig{\tau}
\def\extsig{\pi}
\def\table{A}
\def\bag{X}
\def\used{\textnormal{\texttt{used}}}
\def\unused{\textnormal{\texttt{unused}}}
\def\ttleft{\textnormal{\texttt{left}}}
\def\ttright{\textnormal{\texttt{right}}}

%\usepackage{environ}
%\NewEnviron{killcontents}{}
%\let\proof\killcontents
%\let\endproof\endkillcontents

\begin{document}
\title{Reconfiguration over tree decompositions}
\author
{
    Amer E. Mouawad\inst{1}\thanks{Research supported by the Natural Science and Engineering Research Council of Canada.}
    \and Naomi Nishimura\inst{1}$^{\star}$
    %\and Vinayak Pathak\inst{1}$^{\star}$
    \and Venkatesh Raman\inst{2}
    \and \\Marcin Wrochna\inst{3}
}
\institute
{
    David R. Cheriton School of Computer Science\\
    University of Waterloo, Ontario, Canada.\\
    \email{\{aabdomou, nishi\}@uwaterloo.ca}
    \and
    Institute of Mathematical Sciences\\
    Chennai, India.\\
    \email{vraman@imsc.res.in}\\
    \and
    Institute of Computer Science\\
    Uniwersytet Warszawski, Warsaw, Poland.\\
    \email{mw290715@students.mimuw.edu.pl}
}
\maketitle
\sloppy

\begin{abstract}
A vertex-subset graph problem \textsc{$Q$} defines which subsets
of the vertices of an input graph are feasible solutions.
The reconfiguration version of a vertex-subset problem \textsc{$Q$} asks
whether it is possible to transform one feasible solution for \textsc{$Q$}
into another in at most $\ell$ steps, where each step is a vertex addition or deletion, and
each intermediate set is also a feasible solution for \textsc{$Q$}
of size bounded by $k$.
Motivated by recent results establishing W[1]-hardness of the
reconfiguration versions of most vertex-subset problems parameterized by $\ell$, we investigate
the complexity of such problems restricted to graphs of bounded treewidth.
We show that the reconfiguration versions of
most vertex-subset problems remain PSPACE-complete on graphs of treewidth at most $t$ but
are fixed-parameter tractable parameterized by $\ell + t$ for all vertex-subset problems definable
in monadic second-order logic (MSOL). To prove the latter result, we introduce a technique
which allows us to circumvent cardinality constraints and define
reconfiguration problems in MSOL.
\end{abstract}

\section{Introduction}
{\em Reconfiguration problems} allow the study of structural and algorithmic
questions related to the solution space of computational problems,
represented as a {\em reconfiguration graph} where feasible solutions
are represented by nodes and adjacency by edges
\cite{CVJ08,GKMP09,IDHPSUU11}; a path is equivalent to
the step-by-step transformation of one solution into another
as a {\em reconfiguration sequence} of {\em reconfiguration steps}.

Reconfiguration problems have so far been studied mainly under
classical complexity assumptions, with most work devoted to
deciding whether it is possible to find a path between two solutions.
For several problems, this question has been shown to be
PSPACE-complete~\cite{B12,IDHPSUU11,IKD12}, using reductions that 
construct examples where the length $\ell$ of reconfiguration
sequences can be exponential in the size of the input graph.
It is therefore natural to ask whether we can achieve tractability if we
allow the running time to depend on $\ell$ or on
other properties of the problem, such as a bound $k$ on the size of feasible solutions.
These results motivated Mouawad et al.~\cite{MNRSS13} to study reconfiguration under the
{\em parameterized complexity} framework~\cite{DF97}, showing the
W[1]-hardness of \textsc{Vertex Cover Reconfiguration (VC-R)},
\textsc{Feedback Vertex Set Reconfiguration (FVS-R)}, and
\textsc{Odd Cycle Transversal Reconfiguration (OCT-R)} parameterized by $\ell$,
and of \textsc{Independent Set Reconfiguration (IS-R)},
\textsc{Induced Forest Reconfiguration (IF-R)}, and
\textsc{Induced Bipartite Subgraph Reconfiguration (IBS-R)}
parameterized by $k + \ell$~\cite{MNRSS13}.

Here we focus on reconfiguration problems restricted
to $\mathscr{C}_t$, the class of graphs of treewidth at most $t$.
In Section~\ref{sec-hardness}, we show that a large number of reconfiguration problems,
including the six aforementioned problems, remain PSPACE-complete
on $\mathscr{C}_t$, answering a question left open by Bonsma~\cite{B12Planar}.
The result is in fact stronger in that it applies to graphs of bounded
bandwidth and even to the question of finding a reconfiguration sequence of \emph{any} length.

In Section~\ref{sec-meta}, using an adaptation of Courcelle's
cornerstone result~\cite{Courcelle90}, we present a meta-theorem
proving that the reconfiguration versions of all vertex-subset problems definable
in monadic second-order logic become tractable on $\mathscr{C}_t$
when parameterized by $\ell + t$.
Since the running times implied by our meta-theorem are far from practical, we
consider the reconfiguration versions of problems defined in terms of
hereditary graph properties in Section~\ref{sec-dp}. In particular,
we first introduce signatures to succinctly represent reconfiguration
sequences and define ``generic'' procedures on signatures which can be used to
exploit the structure of nice tree decompositions.
We use these procedures in Section~\ref{sec-vc-tw} to design algorithms solving
\textsc{VC-R} and \textsc{IS-R} in $\Ohstar(4^{\ell} (t+3)^{\ell})$ time
(the $\Ohstar$ notation suppresses factors polynomial in $n$, $\ell$, and $t$).
In Section~\ref{sec-octr-fvsr-tw}, we extend the algorithms to solve
\textsc{OCT-R} and \textsc{IBS-R} in $\Ohstar(2^{\ell t} 4^\ell (t+3)^{\ell})$ time, as
well as \textsc{FVS-R} and \textsc{IF-R} in $\Ohstar(t^{\ell t} 4^\ell (t+3)^{\ell})$ time.
We further demonstrate in Section~\ref{sec-vcr-shifting} that \textsc{VC-R} and \textsc{IS-R} parameterized by
$\ell$ can be solved in $\Ohstar(4^{\ell} (3\ell + 2)^{\ell})$
time on planar graphs by an adaptation of Baker's shifting technique~\cite{BAKER94}.

\section{Preliminaries}\label{sec-prelim}
For general graph theoretic definitions, we refer
the reader to the book of Diestel~\cite{D05}.
We assume that each input graph $G$ is a
simple undirected graph with vertex set $V(G)$ and
edge set $E(G)$, where $|V(G)| = n$ and $|E(G)| = m$.
The {\em open neighborhood} of a vertex $v$ is denoted by $N_G(v) = \{u \mid uv \in E(G)\}$ and the
{\em closed neighborhood} by $N_G[v] = N_G(v) \cup \{v\}$.
For a set of vertices $S \subseteq V(G)$,
we define $N_G(S) = \{v \not\in S \mid uv \in E(G), u \in S\}$ and $N_G[S] = N_G(S) \cup S$.
We drop the subscript $G$ when clear from context.
The subgraph of $G$ induced by $S$ is denoted by $G[S]$, where
$G[S]$ has vertex set $S$ and edge set $\{uv \in E(G) \mid u, v \in S\}$.
Given two sets $S_1,S_2 \subseteq V(G)$, we let
$S_1 \Delta S_2 = \{S_1 \setminus S_2\} \cup \{S_2 \setminus S_1\}$
denote the symmetric  difference of $S_1$ and $S_2$.

We say a graph problem \textsc{$Q$} is a {\em vertex-subset} problem whenever
feasible solutions for \textsc{$Q$} on input $G$ correspond to subsets of $V(G)$.
\textsc{$Q$} is a {\em vertex-subset minimization (maximization)} problem whenever
feasible solutions for \textsc{$Q$} correspond to subsets of $V(G)$ of size at most (at least) $k$, for
some integer $k$. The {\em reconfiguration graph} of a vertex-subset
minimization (maximization) problem \textsc{$Q$}, $R_{\textsc{min}}(G,k)$ ($R_{\textsc{max}}(G,k)$),
has a node for each $S \subseteq V(G)$ such that $|S| \leq k$
($|S| \geq k$) and $S$ is a feasible solution for \textsc{$Q$}.
We say $k$ is the {\em maximum (minimum) allowed capacity} for
$R_{\textsc{min}}(G,k)$ ($R_{\textsc{max}}(G,k)$).
Nodes in a reconfiguration graph are adjacent if they
differ by the addition or deletion of a single vertex.

\begin{definition}
For any vertex-subset problem \textsc{$Q$}, graph $G$, positive integers $k$ and $\ell$, $S_s \subseteq V(G)$,
and $S_t \subseteq V(G)$, we define four decision problems:
\begin{itemize}
\item \textsc{$Q$-Min$(G, k)$}: Is there $S \subseteq V(G)$ such that $|S| \leq k$ and $S$ is a feasible solution for $Q$?
\item \textsc{$Q$-Max$(G, k)$}: Is there $S \subseteq V(G)$ such that $|S| \geq k$ and $S$ is a feasible solution for $Q$?
\item \textsc{$Q$-Min-R$(G, S_s, S_t, k, \ell)$}: For $S_s,S_t \in V(R_{\textsc{min}}(G,k))$, is there a path of length at most $\ell$ between the nodes for $S_s$ and $S_t$ in $R_{\textsc{min}}(G,k)$?
\item \textsc{$Q$-Max-R$(G, S_s, S_t, k, \ell)$}: For $S_s,S_t \in V(R_{\textsc{max}}(G,k))$, is there a path of length at most $\ell$ between the nodes for $S_s$ and $S_t$ in $R_{\textsc{max}}(G,k)$?
\end{itemize}
\end{definition}

For ease of description, we present our positive results for
paths of length exactly $\ell$, as all our algorithmic techniques
can be generalized to shorter paths.
Throughout, we implicitly consider reconfiguration problems
as parameterized problems with $\ell$ as the parameter.
The reader is referred to the books of Downey and Fellows~\cite{DF97},
Flum and Grohe~\cite{FG06}, and Niedermeier~\cite{N06}
for more on parameterized complexity.

In Section~\ref{sec-dp}, we consider problems that can be defined
using graph properties, where a {\em graph property} $\Pi$ is a collection of
graphs closed under isomorphism, and is
{\em non-trivial} if it is non-empty and does not contain all graphs.
A graph property is {\em polynomially decidable} if
for any graph $G$, it can be decided in
polynomial time whether $G$ is in $\Pi$.
The property $\Pi$ is {\em hereditary} if for any
$G \in \Pi$, any induced subgraph of $G$ is also in $\Pi$.
For a graph property $\Pi$, $R_{\textsc{max}}(G,k)$ has a node for each
$S \subseteq V(G)$ such that $|S| \ge k$ and $G[S]$ has property $\Pi$, and
$R_{\textsc{min}}(G,k)$ has a node for each $S \subseteq V(G)$
such that $|S| \le k$ and $G[V(G) \setminus S]$ has property $\Pi$.
We use \textsc{$\Pi$-Min-R} and \textsc{$\Pi$-Max-R} instead of \textsc{$Q$-Min-R} and \textsc{$Q$-Max-R}, respectively,
to denote {\em reconfiguration problems for $\Pi$}; examples include \textsc{VC-R},
\textsc{FVS-R}, and \textsc{OCT-R} for the former and \textsc{IS-R},
\textsc{IF-R}, and \textsc{IBS-R} for the latter, for $\Pi$ defined as
the collection of all edgeless graphs, forests, and bipartite graphs,
respectively.

Proofs of propositions, lemmas, and theorems marked with a star can be found in the appendix.

\begin{proposition}\label{prop-dual}
Given $\Pi$ and a collection of graphs $\mathscr{C}$,
if \textsc{$\Pi$-Min-R} parameterized by $\ell$ is fixed-parameter tractable
on $\mathscr{C}$ then so is \textsc{$\Pi$-Max-R}.
\end{proposition}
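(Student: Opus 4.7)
The plan is to exhibit a parameter-preserving polynomial-time reduction from \textsc{$\Pi$-Max-R} to \textsc{$\Pi$-Min-R} based on set complementation. Specifically, for a graph $G$ with $n$ vertices and a set $S \subseteq V(G)$, consider the map $\phi(S) = V(G) \setminus S$. By the definitions in the preliminaries, $S$ is a node of $R_{\textsc{max}}(G,k)$ iff $G[S]$ has property $\Pi$ and $|S| \geq k$, which by substitution is equivalent to $G[V(G) \setminus \phi(S)]$ having property $\Pi$ and $|\phi(S)| \leq n-k$, i.e., $\phi(S)$ is a node of $R_{\textsc{min}}(G, n-k)$. Hence $\phi$ is a bijection on nodes.

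Next I would observe that $\phi$ also preserves adjacency in the reconfiguration graphs: the symmetric difference is invariant under simultaneous complementation, so $|S_1 \Delta S_2| = |\phi(S_1) \Delta \phi(S_2)|$, and a single vertex added to (resp.\ deleted from) $S$ corresponds to the same vertex being deleted from (resp.\ added to) $\phi(S)$. Therefore $\phi$ is a graph isomorphism between $R_{\textsc{max}}(G,k)$ and $R_{\textsc{min}}(G, n-k)$, and consequently a reconfiguration sequence of length exactly $\ell$ between $S_s$ and $S_t$ in the former exists iff such a sequence exists between $\phi(S_s)$ and $\phi(S_t)$ in the latter.

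This yields the reduction: map the instance $(G, S_s, S_t, k, \ell)$ of \textsc{$\Pi$-Max-R} to the instance $(G, V(G)\setminus S_s, V(G)\setminus S_t, n-k, \ell)$ of \textsc{$\Pi$-Min-R}. The transformation is computable in linear time, keeps the underlying graph $G$ (so membership in $\mathscr{C}$ is preserved), and most importantly leaves the parameter $\ell$ unchanged. Composing this reduction with the assumed FPT algorithm for \textsc{$\Pi$-Min-R} on $\mathscr{C}$ gives an FPT algorithm for \textsc{$\Pi$-Max-R} on $\mathscr{C}$ with the same parameter dependence (up to a polynomial factor in $n$).

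There is no genuine obstacle; the only thing to be careful about is that the duality relates the capacity parameters as $k \leftrightarrow n-k$, which is harmless because $n-k$ is part of the input (not the parameter) and because we are parameterizing solely by $\ell$, which is invariant under $\phi$. The same argument with $k$ replaced by $n-k$ gives the converse direction as well, so \textsc{$\Pi$-Min-R} and \textsc{$\Pi$-Max-R} are in fact FPT-equivalent on any graph class $\mathscr{C}$ under this parameterization.
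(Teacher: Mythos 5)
Your proof is correct and follows exactly the same complementation reduction as the paper: map $(G, S_s, S_t, k, \ell)$ to $(G, V(G)\setminus S_s, V(G)\setminus S_t, n-k, \ell)$ and observe that $\ell$ is unchanged. You simply spell out in more detail the isomorphism between $R_{\textsc{max}}(G,k)$ and $R_{\textsc{min}}(G,n-k)$ that the paper leaves as ``not hard to see.''
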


\begin{proof}
Given an instance $(G, S_s, S_t, k, \ell)$ of \textsc{$\Pi$-Max-R}, where
$G \in \mathscr{C}$, we solve the \textsc{$\Pi$-Min-R} instance
$(G, V(G) \setminus S_s, V(G) \setminus S_t, n - k, \ell)$. Note that the
parameter $\ell$ remains unchanged.

It is not hard to see that
there exists a path between the nodes corresponding
to $S_s$ and $S_t$ in $R_{\textsc{max}}(G,k)$ if and only if
there exists a path of the same length between the nodes corresponding to
$V(G) \setminus S_s$ and $V(G) \setminus S_t$ in $R_{\textsc{min}}(G,n-k)$.
\qed
\end{proof}

We obtain our results by solving \textsc{$\Pi$-Min-R},
which by Proposition~\ref{prop-dual} implies results for \textsc{$\Pi$-Max-R}.
We always assume $\Pi$ to be
non-trivial, polynomially decidable, and hereditary.

Our algorithms rely on dynamic
programming over graphs of bounded treewidth.
A {\em tree decomposition} of a graph $G$ is a pair
$\mathcal{T} = (T, \chi)$, where $T$ is a tree and $\chi$ is
a mapping that assigns to each node $i \in V(T)$ a vertex subset
$X_i$ (called a {\em bag}) such that: (1) $\bigcup_{i \in V(T)}{X_i} = V(G)$,
(2) for every edge $uv \in E(G)$, there exists a
node $i \in V(T)$ such that the bag $\chi(i) = X_i$ contains both $u$ and $v$, and
(3) for every $v \in V(G)$, the set $\{i \in V(T) \mid v \in X_i\}$
forms a connected subgraph (subtree) of $T$.
The {\em width} of any tree decomposition $\mathcal{T}$ is equal
to $\max_{i \in V(T)}|X_i| - 1$.
The {\em treewidth} of a graph $G$, $tw(G)$, is the minimum width of
a tree decomposition of $G$.

For any graph of treewidth $t$, we can compute a tree decomposition of width $t$
and transform it into a nice tree decomposition
of the same width in linear time~\cite{K94}, where a rooted tree
decomposition $\mathcal{T} = (T, \chi)$ with root $root$ of a graph
$G$ is a {\em nice tree decomposition} if each of its nodes is either
(1) a leaf node (a node $i$ with $|X_i| = 1$ and no children), (2) an
introduce node (a node $i$ with exactly one child $j$ such
that $X_i = X_j \cup \{v\}$ for some vertex $v \not\in X_j$;
$v$ is said to be {\em introduced} in $i$), (3) a forget node
(a node $i$ with exactly one child $j$
such that $X_i = X_j \setminus \{v\}$ for some vertex $v \in X_j$;
$v$ is said to be {\em forgotten} in $i$), or (4) a
join node (a node $i$ with two children $p$ and $q$ such that $X_i = X_p = X_q$).
For node $i \in V(T)$, we use $T_i$ to denote the subtree of $T$ rooted at $i$ and
$V_i$ to denote the set of vertices of $G$ contained in the bags of $T_i$.
Thus $G[V_{root}] = G$.

\section{PSPACE-completeness}\label{sec-hardness}
We define a simple intermediary problem that highlights the essential
elements of a PSPACE-hard reconfiguration problem.
Given a pair $H=(\Sigma,E)$, where $\Sigma$ is an alphabet and
$E\subseteq \Sigma^2$ a binary relation between symbols, we say that
a word over $\Sigma$ is an {\em $H$-word} if every two consecutive
symbols are in the relation. If one looks at $H$ as a digraph (possibly with loops),
a word is an $H$-word if and only if it is a walk in $H$.
The \textsc{$H$-Word Reconfiguration} problem asks whether two
given $H$-words of equal length can be transformed into one
another (in any number of steps) by changing one symbol at a time so
that all intermediary steps are also $H$-words.

A {\em Thue system} is a pair $(\Sigma,R)$, where $\Sigma$ is a finite alphabet and
$R\subseteq \Sigma^* \times \Sigma^*$ is a set of rules.
A rule can be applied to a word by replacing one subword
by the other, that is, for two words $s,t\in\Sigma^*$, we
write $s\ruler_R t$ if there is a rule $\{\alpha,\beta\}\in R$
and words $u,v\in\Sigma^*$ such that $s=u\alpha v$ and $t=u\beta v$.
The reflexive transitive closure of this relation defines an equivalence
relation $\reach_R$, where words $s,t$ are equivalent if and only if one can be reached
from the other by repeated application of rules. The {\em word problem} of
$R$ is the problem of deciding, given two words $s,t\in\Sigma^*$, whether
$s\reach_R t$. A Thue system is called {\em $c$-balanced} if for each
$\{\alpha,\beta\}\in R$ we have $|\alpha|=|\beta|=c$.
The following fact is a folklore variant~\cite{BF84} of the classic
proof of undecidability for general Thue systems~\cite{P47}.

\begin{lemma}[*]\label{thue}
There exists a 2-balanced Thue system whose word problem is PSPACE-complete.%(under $\leq^m_{log}$-reducibility).
\end{lemma}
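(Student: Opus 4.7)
The plan is to simulate a linearly bounded automaton by a fixed $2$-balanced Thue system so that the word problem of the system becomes equivalent to acceptance for the LBA, which is PSPACE-complete. The key constraints are that the alphabet and rule set must be fixed, each rule must have both sides of length exactly $2$, and the symmetric (unordered) nature of Thue rules must be handled so that $s\reach_R t$ really reflects a valid computation.

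First, I fix a universal deterministic LBA $U$ with tape alphabet $\Gamma$, state set $Q$, and end-markers $\tapeL,\tapeR$; a PSPACE-complete problem can be encoded in its input in standard fashion. The Thue alphabet is $\Sigma = \Gamma \cup (Q\times\Gamma)\cup\{\tapeL,\tapeR\}$, and a configuration with tape $a_1\cdots a_n$, head at position $i$, and state $q$ is encoded by the word $\tapeL\,a_1\cdots a_{i-1}\,(q,a_i)\,a_{i+1}\cdots a_n\,\tapeR$. Writing the head-state pair as a composite symbol at the head's position is what makes each transition affect exactly two consecutive letters.

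Second, I translate transitions into $2$-balanced rules. A left move $\delta(q,a)=(q',a',L)$ becomes, for each $b\in\Gamma$, the rule $\{\,b\,(q,a)\,,\,(q',b)\,a'\,\}$; a right move becomes $\{\,(q,a)\,b\,,\,a'\,(q',b)\,\}$; the analogous rules near the end-markers handle the boundary. Every rule has both sides of length $2$, so the system is $2$-balanced. To turn the symmetric Thue relation into a faithful simulation of forward computation, I make $U$ reversible via Bennett's construction, which only incurs a polynomial space blow-up and therefore preserves PSPACE-completeness of its acceptance problem. In a reversible LBA each configuration has at most one forward and at most one backward successor, so the Thue relation $\reach_R$ restricted to configuration-encodings coincides with the (symmetric, reflexive, transitive) closure of the step relation of $U$.

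Third, I use this to conclude PSPACE-hardness: given an instance of the PSPACE-complete problem encoded by $U$, let $s$ encode the initial configuration and $t$ encode a canonical accepting configuration; then the instance is a yes-instance iff $s\reach_R t$, because on a reversible deterministic machine any rewriting path between encodings of configurations must lie on the unique bidirectional computation trajectory through $s$. For membership in PSPACE, observe that $2$-balanced rules preserve word length, so the set of words of a given length $n$ that can appear is of size at most $|\Sigma|^n$ and each is storable in polynomial space; guessing a rewriting sequence one step at a time places the problem in NPSPACE, hence in PSPACE by Savitch's theorem. The main obstacle is precisely the symmetry of Thue rewriting — an arbitrary Thue derivation could in principle meander through non-computational detours — and the use of a reversible LBA is what forces every such derivation to be a (possibly back-and-forth) traversal of a legitimate computation, which is what makes the reduction sound.
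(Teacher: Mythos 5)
Your construction is sound and follows the same overall template as the paper's proof (encode machine configurations as fixed-length words with a composite head--state symbol, turn transitions into $2$-balanced rules, get PSPACE membership from length preservation plus Savitch), but you resolve the one genuinely delicate point --- the symmetry of Thue rewriting --- by a different mechanism. You invoke reversibility: after a Bennett-style transformation, the configuration graph has in-degree and out-degree at most one, so undirected reachability collapses to travel along a single bidirectional trajectory. The paper instead keeps an ordinary deterministic machine and exploits only that the configuration digraph is \emph{functional} (out-degree one) together with the assumption that the accepting configuration $t_x$ is absorbing (has a self-loop): in a functional graph, a fixed point is reachable by an undirected path if and only if it is reachable by a directed one, since every forward trajectory in a weak component must fall into that component's unique cycle. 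The paper's route is more elementary --- no reversible simulation is needed --- while yours is more robust in spirit (it would survive even without determinism of the forward relation, given reversibility). Two points in your write-up deserve tightening. First, ``Bennett's construction'' in its original history-tape form blows up \emph{space} to the running time, which is exponential here; you need the checkpointing variant (space $O(s\log T)=O(p(n)^2)$) or the Lange--McKenzie--Tapp result to justify ``polynomial space blow-up.'' Second, with a reversible machine the bidirectional trajectory through $s$ could a priori reach $t$ via $t\to^{*}s$ rather than $s\to^{*}t$; you should say explicitly that this is excluded because the canonical accepting configuration is halting (no forward successor), so only $s\to^{*}t$ remains. Neither issue is fatal, but both are gaps a referee would ask you to close.
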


A simple but technical reduction from Lemma~\ref{thue} allows us to show the PSPACE-completeness of \textsc{$H$-Word Reconfiguration}.
The simplicity of the problem
statement allows for easy reductions to various
reconfiguration problems, as exemplified in Theorem~\ref{th-hardness}.
Similar reductions apply to the reconfiguration versions
of, e.g., \textsc{$k$-Coloring}~\cite{CVJ11} and \textsc{Shortest Path}~\cite{KMM11}  --
a comprehensive discussion is available in an online manuscript by the fourth author~\cite{W14}.

\begin{lemma}[*]\label{hWordReconfiguration}
There exists a digraph $H$ for which \textsc{$H$-Word Reconfiguration} is PSPACE-complete.
\end{lemma}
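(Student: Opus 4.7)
The plan is to reduce from the word problem of the 2-balanced Thue system $(\Sigma,R)$ guaranteed by Lemma~\ref{thue}. Containment in PSPACE is routine: an $H$-word of length $n$ is a configuration of size $O(n\log|V(H)|)$, so one can guess a reconfiguration path symbol-by-symbol in nondeterministic polynomial space and invoke Savitch's theorem. Because Lemma~\ref{thue} supplies a single fixed Thue system, the digraph $H$ constructed below from $(\Sigma,R)$ is itself fixed once and for all.

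For the hardness reduction, I would encode a Thue word $w_1\dots w_n\in\Sigma^*$ as the $H$-word $\$\,w_1\,\$\,w_2\,\$\dots\$\,w_n\,\$$ obtained by inserting a separator symbol $\$$ between every two Thue letters and at the boundaries, so that starting and terminal Thue words become $H$-words of equal length. The vertices of $H$ would be $\Sigma$ together with $\$$ and, for each rule $r=\{a_1a_2,b_1b_2\}\in R$, a pair of marker vertices $L_r,R_r$ whose role is to record that an application of $r$ is in progress at a specific site. The edges would be chosen so that the only way a Thue letter in the encoded word can change is through a short prescribed sequence of single-letter swaps that introduces the markers for some rule $r$, rewrites the two letters involved, and then removes the markers --- schematically,
\[
\$\,a_1\,\$\,a_2\,\$ \;\to\; \$\,L_r\,\$\,a_2\,\$ \;\to\; \$\,L_r\,\$\,R_r\,\$ \;\to\; \$\,b_1\,\$\,R_r\,\$ \;\to\; \$\,b_1\,\$\,b_2\,\$,
\]
faithfully simulating one application of $r$.

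The forward direction of the correctness proof ($s\reach_R t$ implies the encoded words are reconfigurable) then follows by simulating each Thue step by the four-swap sequence above. The delicate point, and the main obstacle, is the converse: showing that every reconfiguration sequence between two encoded words induces a genuine Thue derivation. The plan for this part is a careful local analysis exploiting the fact that each marker vertex in $H$ has a small, explicit neighbourhood. From this one should deduce that a marker $L_r$ or $R_r$ can appear in a reachable word only as part of an in-progress simulation of the specific rule $r$, and that every single-letter swap either fixes the underlying Thue word or completes, or reverses, exactly one rule application. Projecting a reconfiguration sequence onto the $\Sigma$-coordinates would then yield a Thue derivation witnessing $s\reach_R t$, completing the reduction.
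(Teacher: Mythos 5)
There is a genuine gap, and it lies exactly where you flag ``the delicate point'': the separator encoding $\$\,w_1\,\$\,w_2\,\$\cdots\$\,w_n\,\$$ cannot support a sound reduction. In \textsc{$H$-Word Reconfiguration} the \emph{only} constraint is the binary relation $E$ on \emph{consecutive} positions of the word, and a reconfiguration step is validated solely by checking the new word against $E$ --- nothing relates the new symbol to the one it replaced. With your encoding, every Thue letter and every marker has only the separator $\$$ as its word-neighbours, so the sole constraint on the symbol at a letter-position is that it be $E$-compatible with $\$$ on both sides. Consequently $L_r$ may replace \emph{any} letter at \emph{any} site (the word retains no memory that the replaced letter was $a_1$), $R_r$ cannot be forced to appear adjacent to the matching $L_r$ (they are separated by $\$$ and hence never constrained against each other), and the sequence $\$\,c\,\$ \to \$\,L_r\,\$ \to \$\,b_1\,\$$ rewrites an arbitrary $c$ into $b_1$ without any corresponding Thue derivation. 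The converse direction of your correctness proof therefore fails, and no choice of edges incident to $\$$, $L_r$, $R_r$ can repair it, because the separators sever every coupling between adjacent Thue letters that the relation $E$ would need to see.

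The paper's proof circumvents this with an overlapping-pair encoding: position $i$ of the $H$-word stores the pair $(a_{i-1},a_i)$, and the edge relation contains $((a,b),(b,c))$, so consistency of the shared coordinate is enforced between consecutive positions and a single-symbol change is automatically checked against the letters on \emph{both} sides. Rule markers $x_i$ then replace one pair symbol, and their permitted $H$-neighbours (which are pair symbols carrying the flanking letters) pin down both the site and the rule being applied; a projection $\phi$ back to $\Gamma^*$ shows each swap either fixes the underlying word or performs one legal rewrite. The paper also first strengthens Lemma~\ref{thue} so that every rule changes only one of its two letters (via two fresh intermediary symbols per offending rule), which keeps the marker gadget to a single symbol rather than your $L_r,R_r$ pair. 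If you replace your separators with such overlapping pairs (or otherwise make the ``glue'' symbols carry the identities of their neighbours), your outline can be completed along these lines; as written, it cannot.
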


\begin{theorem}\label{th-hardness}
There exists an integer $b$ such that \textsc{VC-R}, \textsc{FVS-R}, \textsc{OCT-R},
\textsc{IS-R}, \textsc{IF-R}, and \textsc{IBS-R} are PSPACE-complete
even when restricted to graphs of treewidth at most $b$.
\end{theorem}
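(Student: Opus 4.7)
The plan is to reduce from the PSPACE-complete \textsc{$H$-Word Reconfiguration} problem (Lemma~\ref{hWordReconfiguration}) to each of the six listed problems. Since Proposition~\ref{prop-dual} provides a polynomial-time reduction from $\Pi$-Max-R to $\Pi$-Min-R that preserves the input graph (and hence its treewidth), and PSPACE is closed under such reductions, it suffices to establish PSPACE-hardness of \textsc{VC-R}, \textsc{FVS-R}, and \textsc{OCT-R} on graphs of bounded treewidth; the hardness of \textsc{IS-R}, \textsc{IF-R}, and \textsc{IBS-R} then follows. Membership in PSPACE is standard, since one can nondeterministically step through a reconfiguration sequence in polynomial space.

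For \textsc{VC-R}, I would fix the PSPACE-hard digraph $H = (\Sigma, E)$ from Lemma~\ref{hWordReconfiguration} and, given two $H$-words $w_s, w_t$ of length $m$, build a graph $G$ with one clique $C_i = \{v_{i,a} : a \in \Sigma\}$ per position $i \in [m]$, together with a ``forbidden-pair'' edge $v_{i,a} v_{i+1,b}$ whenever $(a,b) \notin E$. Setting $k = m(|\Sigma|-1) + 1$, any vertex cover of size at most $k$ leaves at most one vertex uncovered per clique and at most one clique fully covered; valid configurations thus encode either an $H$-word (one symbol per position, with the forbidden-pair edges enforcing the $H$-relation) or an $H$-word with exactly one position emptied. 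Taking $S_s = V(G) \setminus \{v_{i,w_s(i)} : i \in [m]\}$ and defining $S_t$ analogously, each single-symbol change of an $H$-word is simulated by two VC-R steps: add the old symbol vertex to empty its clique, then remove the new one. Ordering vertices by position gives bandwidth at most $2|\Sigma|$, so treewidth is bounded by a constant.

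For \textsc{FVS-R} and \textsc{OCT-R}, I would adapt the same scaffold by replacing each edge of $G$ with a constant-size gadget that forces a cycle (respectively, an odd cycle) through its endpoints---for instance, attaching a private degree-two vertex to every edge so that each edge becomes a triangle. Every triangle must then lose a vertex to keep the remainder acyclic (respectively, bipartite), making the minimum hitting set structurally identical to the minimum vertex cover of the original \textsc{VC-R} construction. The main obstacle will be verifying the exact correspondence of reconfiguration sequences, so that no intermediate state can \emph{cheat} by deviating from the $H$-word trajectory. This reduces to choosing $k$ tightly enough that the slack of a single extra vertex in the solution permits exactly one transient position at a time; any attempt to desynchronize two positions simultaneously would exceed $k$. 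With this tight budget in place, the projection of any valid VC-R (FVS-R, OCT-R) sequence onto the ``complete'' configurations yields a valid $H$-word reconfiguration, and conversely every $H$-word step lifts to a short local sequence in the constructed instance.
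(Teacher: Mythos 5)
Your proposal follows essentially the same route as the paper: a reduction from \textsc{$H$-Word Reconfiguration} using one clique per word position plus ``forbidden-pair'' edges between consecutive positions, a capacity chosen one above the minimum vertex cover size so that each symbol change corresponds to exactly one add--remove pair, the $V_i \cup V_{i+1}$ bags giving constant treewidth, triangle gadgets for \textsc{FVS-R}/\textsc{OCT-R}, and complementation for the three maximization variants. The one step you flag as needing verification (that the tight budget prevents intermediate configurations from deviating from the $H$-word trajectory in the cycle-gadget constructions) is also the step the paper leaves to a citation, so your level of detail matches the original.
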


\begin{proof}
Let $H=(\Sigma,R)$ be the digraph obtained from Lemma~\ref{hWordReconfiguration}.
We show a reduction from \textsc{$H$-Word Reconfiguration} to \textsc{VC-R}.

For an integer $n$, we define $G_n$ as follows. The vertex set contains
vertices $v_i^a$ for all $i\in\{1,\dots,n\}$ and $a\in\Sigma$.
Let $V_i = \{v_i^a \mid a\in\Sigma\}$ for $i\in\{1,\dots,n\}$. The edge set of
$G_n$ contains an edge between every two vertices of $V_i$ for $i\in\{1,\dots,n\}$ and
an edge $v_i^a v_{i+1}^b$ for all $(a,b)\not\in R$ and $i\in\{1,\dots,n-1\}$.
The sets $V_i\cup V_{i+1}$ give a tree decomposition of width $b=2|\Sigma|$.

Let $k=n\cdot (|\Sigma|-1)$ and consider a vertex cover $S$ of $G_n$ of size $k$.
For all $i$, since $G_n[V_i]$ is a clique, $S$ contains all vertices of $V_i$ except at most one.
Since $|S|=\sum_i (|V_i| -1)$, $S$ contains all vertices
except exactly one from each set $V_i$, say $v^{s_i}_i$ for some $s_i \in \Sigma$.
Now $s_1 \dots s_n$ is an $H$-word ($s_i s_{i+1} \in R$, as otherwise $v^{s_i}_i v^{s_{i+1}}_{i+1}$
would be an uncovered edge) and any $H$-word can be
obtained in a similar way, giving a bijection between vertex
covers of $G_n$ of size $k$ and $H$-words of length $n$.

Consider an instance $s,t\in\Sigma^*$ of \textsc{$H$-Word Reconfiguration}.
We construct the instance $(G_n,S_s,S_t,k+1,\ell)$ of \textsc{VC-R}, where
$n=|s|=|t|$, $\ell = 2^{n|\Sigma|}$ (that is, we ask for a reconfiguration sequence
of any length) and $S_s$ and $S_t$ are the vertex
covers of size $k$ that correspond to $s$ and $t$, respectively.
Any reconfiguration sequence between
such vertex covers starts by adding a vertex (since $G_n$ has no vertex cover of size
$k-1$) and then removing another (since vertex covers larger than $k+1$ are
not allowed), which corresponds to changing
one symbol of an $H$-word. This gives a one-to-one correspondence between
reconfiguration sequences of $H$-words and reconfiguration sequences (of exactly twice the length) between
vertex covers of size $k$.
The instances are thus equivalent.

This proof can be adapted to \textsc{FVS-R} and \textsc{OCT-R} by replacing
edges with cycles, e.g. triangles~\cite{MNRSS13}.
For \textsc{IS-R}, \textsc{IF-R}, and \textsc{IBS-R}, we simply need to consider
set complements of solutions for \textsc{VC-R}, \textsc{FVS-R}, and \textsc{OCT-R}, respectively.
\qed
\end{proof}

\section{A meta-theorem}\label{sec-meta}
In contrast to Theorem~\ref{th-hardness}, in this section we show that
a host of reconfiguration problems definable
in monadic second-order logic (MSOL)
become fixed-parameter tractable
when parameterized by $\ell + t$.
First, we briefly review the syntax and semantics of
MSOL over graphs. The reader is referred to
the excellent survey by Martin Grohe~\cite{Grohe07} for more details.

We have an infinite set of {\em individual variables},
denoted by lowercase letters $x$, $y$, and $z$, and an infinite set
of {\em set variables}, denoted by uppercase letters $X$, $Y$, and $Z$.
A {\em monadic second-order formula (MSOL-formula)} $\phi$ over a graph $G$ is constructed from {\em atomic formulas}
$\mathcal{E}(x,y)$, $x \in X$, and $x = y$ using the usual Boolean
connectives as well as existential and universal quantification over individual and set variables.
We write $\phi(x_1,\dots,x_r,X_1,\dots,X_s)$ to indicate that $\phi$ is a
formula with free variables $x_1,\dots,x_r$ and $X_1,\dots,X_s$, where
free variables are variables not bound by quantifiers.

For a formula $\phi(x_1,\dots, x_r,X_1,\dots, X_s)$, a graph $G$, vertices
$v_1,\dots,v_r$, and sets $V_1,\dots,V_r$, we write $G \models \phi(v_1,\dots,v_r,V_1,\dots,V_r)$
if $\phi$ is satisfied in $G$ when $\mathcal{E}$ is interpreted
by the adjacency relation $E(G)$, the variables $x_i$ are interpreted by $v_i$, and variables $X_i$ are interpreted by $V_i$.
We say that a vertex-subset problem $Q$ is
{\em definable in monadic second-order logic} if
there exists an MSOL-formula $\phi(X)$ with one free set variable
such that $S\subseteq V(G)$ is a feasible solution of problem $Q$ for
instance $G$ if and only if $G \models \phi(S)$.
For example, an independent set is definable by the
formula $\phi_{\textsc{is}}(X)=\forall_x \forall_y (x\in X \wedge y\in X) \to \neg \mathcal{E}(x,y)$.

\begin{theorem}[Courcelle~\cite{Courcelle90}]\label{th-cour}
There is an algorithm that given a MSOL-formula
$\phi(x_1,\dots,x_r,X_1,\dots,X_s)$, a graph $G$, vertices $v_1,\dots,v_r\in V(G)$, and
sets $V_1,\dots,V_s\subseteq V(G)$ decides whether
$G \models \phi(v_1,\dots,v_r,V_1,\dots,V_s)$ in $\Oh(f(tw(G), |\phi|) \cdot n)$ time, for some computable function $f$.
\end{theorem}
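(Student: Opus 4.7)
The plan is to take the standard automata-theoretic route to Courcelle's theorem. First, I would compute a tree decomposition of $G$ of width $t=tw(G)$ using a known FPT algorithm (Bodlaender's), then convert it in linear time to a nice tree decomposition of the same width. The free variables $x_1,\dots,x_r$ and $X_1,\dots,X_s$ together with their fixed assignments can be absorbed into the input by vertex-coloring: replace $G$ by a vertex-colored graph in which each vertex is tagged with the set of indices $i$ for which it lies in $V_i$ and the indices $j$ for which it equals $v_j$, and rewrite each atomic formula $z \in X_i$ or $z = x_j$ as the corresponding color predicate. Deciding $G \models \phi(v_1,\dots,v_r,V_1,\dots,V_s)$ thus reduces to deciding whether a colored, bounded-treewidth graph satisfies a closed MSOL-sentence, which is the form I will actually attack.

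Next, I would build, by induction on $\phi$, a deterministic bottom-up tree automaton $\mathcal{A}_\phi$ that reads an encoding of the nice tree decomposition (each node labeled by its bag together with the local coloring and the identification between its bag and those of its children) and accepts exactly when the encoded colored graph satisfies $\phi$. Atomic formulas give trivial automata that inspect only the labels. Boolean connectives are handled by standard product and complement constructions on tree automata. For quantification, I would work with the syntactic generalization in which the free variables of a subformula are represented by extra colors; existential quantification over a vertex or set variable then corresponds to nondeterministically guessing that extra coloring on the input, which is dealt with by the projection construction followed by determinization via the subset construction.

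The evaluation on the tree decomposition is then a routine bottom-up dynamic program: process the four types of nice-tree-decomposition nodes (leaf, introduce, forget, join) by applying the transition function of $\mathcal{A}_\phi$, storing one state at each node. The total running time is $\Oh(|\mathcal{A}_\phi| \cdot n)$, and since $|\mathcal{A}_\phi|$ depends only on $t$ and $|\phi|$, this yields the claimed $\Oh(f(t,|\phi|)\cdot n)$ bound for some computable $f$.

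The main obstacle, and the part that is delicate to make rigorous, is the inductive construction of $\mathcal{A}_\phi$ and the proof that its state space is finite. The cleanest way to do this is a Myhill--Nerode-style argument: declare two labeled partial tree decompositions equivalent if, viewed as colored subgraphs with a distinguished interface, they satisfy exactly the same subformulas of $\phi$ under every extension; then show by induction on the structure of $\phi$ that this equivalence has finitely many classes (bounded by a function of $t$ and $|\phi|$), and take those classes as the states of $\mathcal{A}_\phi$. The subset construction applied at each quantifier alternation is what makes $f$ non-elementary in $|\phi|$, which is well known to be unavoidable in full generality.
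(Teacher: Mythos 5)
The paper does not prove this statement at all: Theorem~\ref{th-cour} is quoted as a black-box result attributed to Courcelle~\cite{Courcelle90}, and the authors only \emph{use} it (in Theorem~\ref{th-meta}). So there is no in-paper proof to compare against; what you have written is a proof sketch of Courcelle's theorem itself.

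As such a sketch, your outline is the standard and correct automata-theoretic route: reduce to a closed sentence over a vertex-colored graph by absorbing the assignments to $x_1,\dots,x_r,X_1,\dots,X_s$ into unary predicates, encode the nice tree decomposition as a labeled tree over a finite alphabet (bags represented by at most $t+1$ slots, with local adjacency and the parent--child slot identification in the label), build a finite tree automaton $\mathcal{A}_\phi$ by induction on $\phi$ (product/complement for Boolean connectives, projection plus subset-construction determinization for quantifiers), and evaluate bottom-up in $\Oh(|\mathcal{A}_\phi|\cdot n)$ time. Two remarks. First, your final paragraph slightly conflates two distinct proofs: if you carry out the explicit automata constructions, finiteness of the state space is automatic at every step and no separate Myhill--Nerode argument is needed; the Myhill--Nerode/compositionality route (finitely many $\phi$-types of boundaried graphs, established via a Feferman--Vaught-style splitting lemma for join and introduce nodes) is an \emph{alternative} proof, not a missing ingredient of the automaton proof. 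Either one suffices. Second, the step that genuinely carries the weight and that your sketch states but does not argue is the inductive invariant that $\mathcal{A}_\psi$ accepts an encoded decomposition if and only if the encoded colored graph satisfies $\psi$ --- in particular that this holds for \emph{every} valid encoding of every tree decomposition of width at most $t$, which is what makes the projection step sound for existential quantification. With that invariant made precise, the sketch is a faithful reconstruction of the cited theorem, and it is entirely consistent with how the paper invokes it.
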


\begin{theorem}\label{th-meta}
If a vertex-subset problem $Q$ is definable in monadic
second-order logic by a formula $\phi(X)$, then
\textsc{$Q$-Min-R} and \textsc{$Q$-Max-R}
parameterized by $\ell + tw(G) + |\phi|$ are fixed-parameter tractable.
\end{theorem}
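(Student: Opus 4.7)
The plan is to reduce to Courcelle's Theorem (Theorem~\ref{th-cour}) by constructing a single MSOL formula $\Psi$ that captures the existence of a reconfiguration sequence of length exactly $\ell$ between $S_s$ and $S_t$, and then iterate the construction for lengths $0,1,\dots,\ell$ to handle ``at most $\ell$''.

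The formula $\Psi$ would existentially quantify $\ell+1$ set variables $X_0,\dots,X_\ell$ (the intermediate solutions) and $\ell$ individual variables $v_1,\dots,v_\ell$ (the vertex toggled at each step), and then assert three things: that $X_0$ and $X_\ell$ coincide with the concrete inputs $S_s$ and $S_t$, expressible by vertex-by-vertex membership comparisons using $S_s$ and $S_t$ as additional free set variables; that for each $i$ the set $X_i$ is obtained from $X_{i-1}$ by toggling $v_i$, via the subformula $\forall x\,((x \in X_i) \leftrightarrow ((x \in X_{i-1}) \oplus (x = v_i)))$; and that $\phi(X_i)$ holds for every $i$, simply by plugging $X_i$ into the given formula.

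The main obstacle is the cardinality requirement $|X_i| \leq k$ for \textsc{$Q$-Min-R} (resp.\ $|X_i| \geq k$ for \textsc{$Q$-Max-R}), which MSOL cannot express directly since $k$ may be arbitrarily large while the formula size must remain controlled by the parameter. The trick I would use is that each $X_i$ has size differing from $|S_s|$ by at most $\ell$: writing $a_i$ for the number of steps $j \leq i$ at which $v_j \in X_{j-1}$ (a removal step), one gets $|X_i| = |S_s| + i - 2 a_i$, so $|X_i| \leq k$ becomes $a_i \geq c_i$ for a threshold $c_i := \lceil (|S_s| + i - k)/2 \rceil$ computable from the input in polynomial time. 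If $c_i \leq 0$ the constraint is trivially true and can be dropped; if $c_i > \ell$ it is infeasible and we reject the instance outright; otherwise $c_i \in \{1,\dots,\ell\}$ and ``$a_i \geq c_i$'' is written as the finite disjunction, over all $c_i$-subsets $J \subseteq \{1,\dots,i\}$, of the conjunction $\bigwedge_{j \in J} (v_j \in X_{j-1})$. Crucially, the size of this encoding depends on $\ell$ alone, not on $k$. The \textsc{Max-R} case is symmetric, yielding an upper bound on $a_i$.

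Putting everything together, $\Psi$ has size bounded by some function of $\ell$ and $|\phi|$, so by Theorem~\ref{th-cour} it can be evaluated in time $f(tw(G),\ell,|\phi|) \cdot n$, giving fixed-parameter tractability in $\ell + tw(G) + |\phi|$. The only delicate point I expect is the bookkeeping tying $|S_s|$, $k$, and the thresholds $c_i$ together — this is routine, but one must ensure that the encoding size remains independent of $k$ in order to recover the correct parameter dependence.
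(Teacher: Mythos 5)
Your proof is correct, and it follows the same high-level strategy as the paper: existentially quantify the $\ell-1$ intermediate sets, plug each into $\phi$, force consecutive sets to differ by one vertex, and exploit the fact that $|X_i|$ drifts from $|S_s|$ by at most $\ell$ to tame the cardinality constraint before invoking Theorem~\ref{th-cour}. The one place where you genuinely diverge is the mechanism for that last step. The paper keeps the counting \emph{outside} the logic: it enumerates all sign sequences $\sigma\in\{-1,+1\}^\ell$, discards (by ordinary arithmetic on $|S_s|$ and $k$) those violating the capacity, and for each survivor builds a formula $\omega_\sigma$ in which step $i$ is hard-coded as an addition or a deletion via $\psi_{+1}$ or $\psi_{-1}$; the final formula is the disjunction $\bigvee_{\sigma\in L}\omega_\sigma$. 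You instead keep the counting \emph{inside} the logic: a single formula with toggling semantics, where the sign of step $i$ is read off from whether $v_i\in X_{i-1}$, and the prefix constraints $|X_i|\le k$ become threshold assertions ``at least $c_i$ of the first $i$ steps are removals,'' written as disjunctions over $c_i$-subsets of step indices. Both encodings have size $2^{\Theta(\ell)}\cdot\mathrm{poly}(\ell)\cdot|\phi|$, so neither buys an asymptotic advantage; the paper's version is arguably cleaner to verify (each $\omega_\sigma$ fixes the whole add/delete pattern, so correctness is immediate), while yours produces one self-contained formula and makes explicit that the only instance-dependent data entering the formula are the thresholds $c_i\in\{1,\dots,\ell\}$. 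The paper's per-$\sigma$ decomposition also foreshadows the dynamic-programming algorithms of Section~\ref{sec-dp}, which fix $\sigma$ up front, so it is not merely a stylistic choice there. Your bookkeeping ($|X_i|=|S_s|+i-2a_i$, the trivial/infeasible cases for $c_i$, and the iteration over lengths $0,\dots,\ell$) is all sound.
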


\begin{proof}
We provide a proof for \textsc{$Q$-Min-R} as the proof for \textsc{$Q$-Max-R}
is analogous. Given an instance $(G, S_s, S_t, k, \ell)$ of \textsc{$Q$-Min-R},
we build an MSOL-formula $\omega(X_0,X_\ell)$ such that
$G \models \omega(S_s,S_t)$ if and only if the corresponding
instance is a yes-instance.
Since the size of $\omega$ will be bounded by a function of $\ell + |\phi|$, the statement will follow from Theorem~\ref{th-cour}.

As MSOL does not allow cardinality constraints, we overcome this
limitation using the following technique.
We let $L \subseteq \{-1,+1\}^\ell$ be the set of
all sequences of length $\ell$ over $\{-1,+1\}$ which do not violate the maximum allowed
capacity. In other words, given $S_s$ and $k$, a sequence $\sigma$ is in $L$ if and only if for all $\ell'\leq \ell$ it satisfies
$|S_s| + \sum_{i = 1}^{\ell'}{\sigma[i]} \leq k$, where $\sigma[i]$ is the
$i^{th}$ element in sequence $\sigma$.
We let $\omega = \bigvee_{\sigma\in L} \omega_\sigma$ and
$$ \omega_\sigma(X_0,X_\ell) = \exists_{X_1,\ldots,X_{\ell-1}}\
\bigwedge_{0 \leq i \leq \ell} \phi(X_i) \wedge
\bigwedge_{1 \leq i \leq \ell} \psi_{\sigma[i]}(X_{i-1},X_{i}) $$
where $\psi_{-1}(X_{i-1},X_{i})$ means $X_i$ is obtained from $X_{i-1}$
by removing one element and $\psi_{+1}(X_{i-1},X_{i})$ means it is obtained by adding one element.
Formally, we have:
$$\psi_{-1}(X_{i-1},X_{i}) =
    \exists_x\ x \in X_{i-1} \wedge x \not\in X_{i}
     \wedge \forall y  \left(y \in X_{i} \leftrightarrow (y \in X_{i - 1} \wedge y \neq x)\right)$$
$$\psi_{+1}(X_{i-1},X_{i}) =
    \exists_x\  x \not\in X_{i-1} \wedge x \in X_{i}
    \wedge \forall y \left(y \in X_{i} \leftrightarrow (y \in X_{i - 1} \vee y = x)\right)$$
It is easy to see that $G\models \omega_\sigma(S_s,S_t)$ if and only if there
is a reconfiguration sequence from $S_s$ to $S_t$ (corresponding to $X_0,X_1,\dots,X_\ell$)
such that the $i^{th}$ step removes a vertex if $\sigma[i]=-1$ and adds a vertex if $\sigma[i]=+1$.
Since $|L| \leq 2^\ell$, the size of the MSOL-formula $\omega$ is bounded by an (exponential) function of $\ell + |\phi|$.
\qed
\end{proof}

\section{Dynamic programming algorithms}\label{sec-dp}
Throughout this section we will consider one fixed instance
$(G, S_s, S_t, k, \ell)$ of \textsc{$\Pi$-Min-R} and a nice tree decomposition
$\mathcal{T} = (T, \chi)$ of $G$.
Moreover, similarly to the previous section, we will ask, for a fixed sequence $\sigma \in \{-1,+1\}^\ell$,
whether $G\models \omega_\sigma(S_s,S_t)$ holds.
That is, we ask whether there is a reconfiguration sequence which at 
the $i^{th}$ step removes a vertex when $\sigma[i]=-1$ and adds a vertex when $\sigma[i]=+1$.
The final algorithm then asks such a question for every sequence $\sigma$ which does not violate the maximum allowed
capacity: $|S_s| + \sum_{i = 1}^{\ell'}{\sigma[i]} \leq k$ for all $\ell' \leq \ell$.
This will add a factor of at most $2^\ell$ to the running time.

\subsection{Signatures as equivalence classes}\label{sec-sign}
A reconfiguration sequence can be described as a sequence of steps, each step
specifying which vertex is being removed or added.
To obtain a more succinct representation,
we observe that in order to propagate information up from the leaves
to the root of a nice tree decomposition,
we can ignore vertices outside of the currently considered bag ($X_i$)
and only indicate whether a step has been used by a vertex
in any previously processed bags, i.e. a vertex in $V_i \setminus X_i$.

\begin{definition}\label{def-sign}
A {\em signature} $\sig$ over a set $X \subseteq V(G)$ is a sequence
of steps $\sig[1],\dots,\sig[\ell]\in X \cup \textnormal{\{\used,\unused\}}$.
Steps from $X$ are called \emph{vertex steps}.
\end{definition}

The total number of signatures over a bag $X$ of at most $t$ vertices is $(t+3)^\ell$.
Our dynamic programming algorithms start by considering a signature
with only $\unused$ steps in each leaf node, specify when a vertex may be
added/removed in introduce nodes by replacing $\unused$ steps with
vertex steps ($\sig[i]=\unused$ becomes $\sig[i]=v$ for the introduced vertex $v$),
merge signatures in join nodes, and replace vertex steps with $\used$ steps in forget nodes.

For a set $S \subseteq V(G)$ and a bag $X$, we let $\sig(i,S) \subseteq S \cup X$ denote the set of
vertices obtained after executing the first $i$ steps of $\sig$:
the $i^{th}$ step adds $\sig[i]$ if $\sig[i]\in X$ and $\sigma[i]=+1$, removes
it if $\sig[i]\in X$ and $\sigma[i]=-1$, and does nothing if $\sig[i]\in\{\used,\unused\}$.

A valid signature must ensure that no step deletes a vertex that is
absent or adds a vertex that is already present, and that the set of vertices
obtained after applying reconfiguration steps to $S_s \cap X$ is the set
$S_t \cap X$. Additionally, because $\Pi$ is hereditary, we can check whether this
property is at least locally satisfied (in $G[X]$) after each step of the sequence.
More formally, we have the following definition.

\begin{definition}\label{def-valid}
A signature $\sig$ over $X$ is {\em valid} if
\begin{enumerate}[label=(\arabic*),topsep=2pt,itemindent=10pt]
\item $\sig[i] \in \sig(i-1,S_s \cap X)$ whenever $\sig[i]\in X$ and $\sigma[i]=-1$,
\item $\sig[i] \not\in \sig(i-1,S_s \cap X)$ whenever $\sig[i]\in X$ and $\sigma[i]=+1$,
\item $\sig(\ell, S_s \cap X) = S_t \cap X$, and
\item $G[X \setminus\sig(i, S_s \cap X)] \in \Pi$ for all $i \leq \ell$.
\end{enumerate}
\end{definition}

It is not hard to see that
a signature $\sig$ over $X$ is valid if and only if
$\sig(0,S_s\cap X),\dots,\sig(\ell,S_s\cap X)$ is a well-defined path
between $S_s \cap X$ and $S_t \cap X$ in
$R_{\textsc{min}}(G[X],n)$.
We will consider only valid signatures.
The dynamic programming algorithms will enumerate exactly the signatures that 
can be extended to valid signatures over $V_i$ in the following sense:

\begin{definition}
A signature $\extsig$ over $V_i$ {\em extends} a signature $\pi$ over $\bag_i$ if it is
obtained by replacing some \textnormal{$\used$} steps with vertex steps from $V_i\setminus \bag_i$
\end{definition}

However, for many problems, the fact that $S$ is a solution for $G[X]$ for
each bag $X$ does not imply that $S$ is a solution for $G$, and checking
this `local' notion of validity will not be enough -- the algorithm will have to maintain additional information.
One such example is the \textsc{OCT-R} problem, which we discuss in Section~\ref{sec-octr-fvsr-tw}.

\subsection{An algorithm for VC-R}\label{sec-vc-tw}
To process nodes of the tree decomposition, we now define ways of
generating signatures from other signatures.
The introduce operation determines all ways that an introduced vertex
can be represented in a signature, replacing
$\unused$ steps in the signature of its child.

\begin{definition}%[The introduce operation]
\label{def-introduce}
Given a signature $\sig$ over $X$ and a vertex $v \not\in X$,
the {\em introduce operation}, $introduce(\sig, v)$ returns the following set of signatures over $X \cup \{v\}$:
for every subset $I$ of indices $i$ for which $\sig[i]=\unused$, consider a
copy $\sig'$ of $\sig$ where for all $i\in I$ we set $\sig'[i]=v$, check if it is valid, and if so, add it to the set.
\end{definition}
In particular $\sig \in introduce(\sig, v)$ and $|introduce(\sig, v)| \leq 2^\ell$.
All signatures obtained through the introduce operation are valid, because of the explicit check.

\begin{definition}%[The forget operation]
\label{def-forget}
Given a signature $\sig$ over $X$ and a vertex $v\in X$,
the {\em forget operation}, returns
a new signature $\sig'=forget(\sig, v)$ over $X \setminus \{v\}$
such that for all $i\leq \ell$, we have $\sig'[i]=\textnormal{\used}$ if $\sig[i]=v$ and $\sig'[i]=\sig[i]$ otherwise.
\end{definition}

Since $\sig'(i, S_s\cap X \setminus \{v\}) = \sig(i,S_s\cap X)\setminus \{v\}$, it is
easy to check that the forget operation preserves validity.

\begin{definition}%[The join operation]
\label{def-join}
Given two signatures $\sig_1$ and $\sig_2$ over $X \subseteq V(G)$,
we say $\sig_1$ and $\sig_2$ are \emph{compatible}
if for all $i\leq \ell$:

\begin{enumerate}[label=(\arabic*),nosep,itemindent=10pt]
	\item $\sig_1[i] = \sig_2[i] = \unused$,
	\item $\sig_1[i] = \sig_2[i] = v$ for some $v\in X$, or
	\item either $\sig_1[i]$ or $\sig_2[i]$ is equal to \textnormal{$\used$} and the other is equal to \textnormal{$\unused$}.
\end{enumerate}

\noindent
For two compatible signatures $\sig_1$ and $\sig_2$,
the {\em join operation} returns
a new signature $\sig'=join(\sig_1, \sig_2)$ over $X$
such that for all $i\leq \ell$ we have, respectively:

\begin{enumerate}[label=(\arabic*),nosep,itemindent=10pt]
	\item $\sig'[i] = \textnormal{\unused}$,
	\item $\sig'[i] = v$, and
	\item $\sig'[i] = \textnormal{\used}$.
\end{enumerate}
\end{definition}

\noindent
Since $\sig'=join(\sig_1,\sig_2)$ is a signature over the same
set as $\sig_1$ and differs from $\sig_1$ only by replacing some $\unused$ steps with $\used$ steps, the join operation preserves validity, that is,
if two compatible signatures $\sig_1$ and $\sig_2$ are valid
then so is $\sig' = join(\sig_1, \sig_2)$.

\medskip
Let us now describe the algorithm.
For each $i \in V(T)$ we assign an initially empty table $\table_i$.
All tables corresponding to internal nodes of $T$ will be updated by simple applications
of the introduce, forget, and join operations. %defined in Section~\ref{sec-sign}.

\smallskip\noindent\textit{Leaf nodes.}
Let $i$ be a leaf node,
that is $\bag_i=\{v\}$ for some vertex $v$.
We let $\table_i = introduce(\sig, v)$, where $\sig$ is the signature with only $\unused$ steps.

\smallskip\noindent\textit{Introduce nodes.}
Let $j$ be the child of an introduce node $i$,
that is $\bag_i = \bag_j \cup \{v\}$ for some $v \not\in \bag_j$.
We let $\table_i =\bigcup_{\sig\in\table_j} introduce(\sig, v)$.

\smallskip\noindent\textit{Forget nodes.}
Let $j$ be the child of a forget node $i$,
that is $\bag_i = \bag_j \setminus \{v\}$ for some $v \in \bag_j$.
We let $\table_i = \{ forget(\sig, v) \mid \sig\in\table_j \}$.

\smallskip\noindent\textit{Join nodes.}
Let $j$ and $h$ be the children of a join node $i$,
that is $\bag_i = \bag_j = \bag_h$.
We let $\table_i = \{join(\sig_j, \sig_h) \mid \sig_j\in\table_j, \sig_h\in\table_h, \mbox{ and }\sig_j\mbox{ is compatible with }\sig_h\}$.

\smallskip
The operations were defined so that the following lemma holds by induction.
The theorem then follows by making the algorithm accept when
$\table_{root}$ contains a signature $\sig$ such that no step of $\sig$ is $\unused$.

\begin{lemma}[*]\label{lem-dp-induction}
For $i \in V(T)$ and a signature $\sig$ over $\bag_i$, $\sig\in \table_i$
if and only if
$\sig$ can be extended to a signature over $V_i$ that is valid.
\end{lemma}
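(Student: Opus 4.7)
The plan is a structural induction on the nice tree decomposition $\mathcal{T}$ processed from leaves to root, with one case per node type (leaf, introduce, forget, join). Each case requires both directions of the equivalence: in the forward direction I would convert $\sig \in \table_i$ into a valid extension $\extsig$ over $V_i$ by assembling the extensions guaranteed by the induction hypothesis on the children; in the backward direction I would restrict an assumed valid extension over $V_i$ to suitable child signatures and verify they are already in the child tables. All four conditions of Definition~\ref{def-valid} must be re-checked on every constructed or restricted signature.

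For a leaf node with $V_i = X_i = \{v\}$, the table $\table_i = introduce(\sig_\emptyset, v)$ (for $\sig_\emptyset$ the all-$\unused$ signature) enumerates exactly the valid signatures over $\{v\}$ by Definition~\ref{def-introduce}, and extension is the identity, so the base case is immediate. For an introduce node I would pair $\sig \in \table_i$ with the $\sig_j \in \table_j$ from which it arose, invoke the induction hypothesis to obtain $\extsig_j$ over $V_j$, and build $\extsig$ over $V_i = V_j \cup \{v\}$ by keeping the newly placed $v$-steps from $\sig$ and copying $\extsig_j$ elsewhere; the reverse direction replaces $v$-steps in $\extsig$ by $\unused$ to recover $\sig_j$ and $\extsig_j$. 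For a forget node $V_i = V_j$, so $\extsig_j$ itself serves as $\extsig$, with the $v$-steps of $\sig_j$ that became $\used$ in $\sig$ now legitimately counting as vertex steps from $V_i \setminus X_i$. For a join node, the compatibility of $\sig_j$ and $\sig_h$ lets me merge $\extsig_j$ and $\extsig_h$ into a single $\extsig$ over $V_i = V_j \cup V_h$, since they agree on $X_i$ and their ``private'' indices concern the disjoint sets $V_j \setminus X_i$ and $V_h \setminus X_i$; conversely, a valid $\extsig$ over $V_i$ splits naturally along the tree decomposition into compatible restrictions in the two child tables.

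Conditions (1)--(3) of Definition~\ref{def-valid} are essentially bookkeeping: the introduce, forget, and join operations were designed precisely so that the right vertex is added or removed at the right index and the target set matches $S_t$ on the new scope. The real obstacle, and the main part of the proof, will be condition (4): promoting the bag-local guarantee $G[X \setminus \sig(i',\cdot)] \in \Pi$ that the algorithm enforces to the global $G[V_i \setminus \extsig(i',\cdot)] \in \Pi$ required of extensions. In the backward directions this is a straightforward application of the hereditary property of $\Pi$, since any such local set is an induced subgraph of the global one. In the forward directions the hereditary property points the wrong way, and one must instead use that $\Pi$ is \textbf{edgeless} together with the tree-decomposition separation property: at an introduce node every neighbor of $v$ in $V_i$ lies in $X_j$, so local edgelessness on $X_i$ already rules out any new edge; at a join node there are no edges between $V_j \setminus X_i$ and $V_h \setminus X_i$, so each edge of $G[V_i \setminus S]$ sits entirely in $V_j$ or entirely in $V_h$ and is already handled on one side. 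This is the \textsc{VC-R}-specific step, and it foreshadows the more elaborate signatures of the later sections, which are needed when local validity no longer implies global validity.

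Finally, applying Lemma~\ref{lem-dp-induction} at $i=root$ with $V_{root} = V(G)$ yields the correctness of the algorithm: an entry $\sig \in \table_{root}$ with no $\unused$ step extends to a valid signature over $V(G)$, which by Definition~\ref{def-valid} is precisely a reconfiguration sequence of the prescribed shape $\sigma$ from $S_s$ to $S_t$ in $G$.
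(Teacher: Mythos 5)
Your proposal matches the paper's proof essentially step for step: the same structural induction over the four node types with both directions in each case, the same restriction argument via hereditariness of $\Pi$ for the backward directions (the paper's Lemma~\ref{lem-subsig-valid}), and the same handling of condition (4) in the forward directions via edgelessness of $\Pi$ combined with the tree-decomposition separation property (the paper's Lemma~\ref{lem-vc-cover} and Corollary~\ref{cor-vc-cover-valid}). No gaps.
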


\begin{theorem}[*]\label{th-vcr-tw}
\textsc{VC-R} and \textsc{IS-R} can be solved in
$\Ohstar(4^{\ell} (t+3)^{\ell})$ time on graphs of treewidth $t$.
\end{theorem}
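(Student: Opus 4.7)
The plan is to take the dynamic-programming machinery developed in Section~\ref{sec-dp} as a black box and verify that it applies to \textsc{VC-R} and that its running time matches the claimed bound. By Proposition~\ref{prop-dual}, it suffices to handle \textsc{VC-R}, which is \textsc{$\Pi$-Min-R} for $\Pi$ the class of edgeless graphs. Given the input, I would first compute a nice tree decomposition of width $t$ in linear time, and then, for each of the at most $2^\ell$ sign sequences $\sigma\in\{-1,+1\}^\ell$ satisfying $|S_s|+\sum_{i=1}^{\ell'}\sigma[i]\leq k$ for every $\ell'\leq\ell$, run the DP of Section~\ref{sec-vc-tw} and accept if $\table_{root}$ ever contains a signature with no $\unused$ step.

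For correctness, the key observation is that when $\Pi$ is the edgeless property, the ``local'' validity condition (4) of Definition~\ref{def-valid} is exactly equivalent to global feasibility: $G[V(G)\setminus S]$ is edgeless if and only if $G[X\setminus S]$ is edgeless for every bag $X$, because every edge of $G$ lies in some bag. Combined with Lemma~\ref{lem-dp-induction}, which asserts that $\sig\in\table_i$ iff $\sig$ extends to a valid signature over $V_i$, this yields that the algorithm accepts for some $\sigma$ iff there is a valid reconfiguration sequence of length $\ell$ from $S_s$ to $S_t$.

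For the running time, each bag has at most $t+1$ vertices, so a signature has one of at most $t+3$ possible entries per step and $|\table_i|\leq(t+3)^\ell$. The introduce operation is the dominant cost: for each $\sig\in\table_j$, we enumerate subsets of its $\unused$ positions to relabel as the newly introduced vertex, yielding at most $2^\ell$ candidates and hence $\Ohstar((t+3)^\ell\cdot 2^\ell)$ work per introduce node. Forget is cheaper, and join costs no more than introduce because, once $\sig_j$ is fixed, compatibility pins $\sig_h$ at every non-$\unused$ position of $\sig_j$, leaving at most $2^\ell$ choices for $\sig_h$. Summing over the $O(n)$ nodes of the decomposition and the $2^\ell$ sequences $\sigma$ gives a total running time of $\Ohstar(2^\ell\cdot 2^\ell\cdot (t+3)^\ell)=\Ohstar(4^\ell(t+3)^\ell)$; the bound for \textsc{IS-R} then follows from Proposition~\ref{prop-dual}.

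The main obstacle I anticipate is the first step of the correctness argument, namely checking that the purely local validity enforced by signatures correctly captures being a vertex cover of the whole graph. This is delicate precisely because edgeless-ness is a ``2-local'' property, and the argument would fail for hereditary properties defined by longer forbidden substructures such as being a forest or being bipartite, which is exactly why Section~\ref{sec-octr-fvsr-tw} introduces additional bookkeeping for \textsc{FVS-R}, \textsc{OCT-R}, and their maximization counterparts. A secondary book-keeping point is the join bound above, which avoids the naive $(t+3)^{2\ell}$ estimate by iterating over $\sig_j$ first and exploiting the restriction that compatibility places on $\sig_h$.
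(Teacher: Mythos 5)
Your proposal is correct and follows essentially the same route as the paper: it invokes Lemma~\ref{lem-dp-induction} for correctness of the acceptance condition at the root, bounds each table by $(t+3)^{\ell}$ signatures with a $2^{\ell}$ overhead per introduce/join node, multiplies by the $2^{\ell}$ admissible sequences $\sigma$, and handles \textsc{IS-R} via Proposition~\ref{prop-dual}. Your count for join nodes (fix $\sig_j$, note compatibility leaves at most $2^{\ell}$ choices for $\sig_h$) is just the mirror image of the paper's count (fix the result and split its $\used$ steps between the children), and your remark that the $2$-locality of edgelessness is what makes the purely local validity check suffice is precisely the content of Lemma~\ref{lem-vc-cover} inside the proof of Lemma~\ref{lem-dp-induction}.
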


\subsection{VC-R in planar graphs}\label{sec-vcr-shifting}
Using an adaptation of Baker's approach for
decomposing planar graphs~\cite{BAKER94}, also
known as the {\em shifting technique}~\cite{BK08,DH08,E00},
we show a similar result for \textsc{VC-R} and \textsc{IS-R} on planar graphs.
The idea is that at most $\ell$ elements of a solution will be changed,
and thus if we divide the graph into $\ell+1$ parts, one of these parts
will be unchanged throughout the reconfiguration sequence.
The shifting technique allows the definition of the $\ell+1$ parts so that
removing one (and replacing it with simple gadgets to preserve all needed information)
yields a graph of treewidth at most $3\ell-1$.

\begin{theorem}[*]\label{th-vcr-planar}
\textsc{VC-R} and \textsc{IS-R} are fixed-parameter tractable
on planar graphs when parameterized by $\ell$. Moreover,
there exists an algorithm which solves both problems
in $\Ohstar(4^{\ell} (3\ell + 2)^{\ell})$ time.
\end{theorem}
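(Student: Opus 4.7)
The plan is to apply Baker's shifting technique to reduce \textsc{VC-R} on planar graphs to a bounded-treewidth instance solvable by Theorem~\ref{th-vcr-tw}. Given a planar instance $(G,S_s,S_t,k,\ell)$ of \textsc{VC-R}, I would first observe that any reconfiguration sequence of length $\ell$ changes the membership status of at most $\ell$ vertices total. Then, starting from an arbitrary vertex, I compute a BFS layering $L_0,L_1,L_2,\ldots$ of $G$, and for each shift $s\in\{0,1,\ldots,\ell\}$ define $R_s=\bigcup_{j\ge 0} L_{s+j(\ell+1)}$. Since the sets $R_0,\ldots,R_\ell$ partition $V(G)$ and at most $\ell$ vertices are touched, by pigeonhole at least one $R_s$ contains no touched vertex in any hypothetical yes-witness. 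The algorithm will try all $\ell+1$ shifts and accept if any succeeds.

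For each shift $s$, I construct a reduced instance $G'_s$ as follows. First, if $S_s\cap R_s\neq S_t\cap R_s$ the shift is infeasible and is skipped. Otherwise remove every vertex of $R_s$ from $G$. For each edge $uv$ of $G$ with $u\in R_s$, $v\notin R_s$: if $u\in S_s$ the edge is permanently covered by $u$ and can be dropped; if $u\notin S_s$, then $v$ must be in the cover at every step, and I encode this by attaching to $v$ a gadget of $\ell+1$ pendant vertices, each adjacent only to $v$, with all pendants lying outside both $S'_s=S_s\setminus R_s$ and $S'_t=S_t\setminus R_s$; the new capacity is $k'=k-|S_s\cap R_s|$. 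The resulting graph $G'_s$ stays planar, each of its components lies within at most $\ell$ consecutive BFS layers of $G$, so $G'_s$ is $\ell$-outerplanar and has treewidth at most $3\ell-1$; pendant vertices do not increase treewidth. Invoking Theorem~\ref{th-vcr-tw} with $t=3\ell-1$ on each of the $\ell+1$ modified instances yields total running time $(\ell+1)\cdot\Ohstar(4^\ell(3\ell+2)^\ell)=\Ohstar(4^\ell(3\ell+2)^\ell)$. The extension to \textsc{IS-R} is immediate from Proposition~\ref{prop-dual}, since the complementation used in that reduction keeps the graph unchanged and hence planar.

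For correctness, the forward direction uses the untouched-shift argument: freezing $R_s$ at its $S_s$-status preserves all constraints within $R_s$, and the gadget ensures forced vertices behave correctly. The reverse direction lifts any reconfiguration sequence for $G'_s$ back to $G$ by holding each vertex of $R_s$ at its $S_s$-status throughout; every edge incident to $R_s$ is covered either by an $R_s$-endpoint in $S_s$ or by a $V\setminus R_s$-endpoint that the gadget forces to be in the cover at every step. The main obstacle is verifying that the pendant gadget really enforces ``$v\in S_i$ for all $i$'' without introducing spurious behavior: if $v\notin S_i$ at some step, then all $\ell+1$ pendants of $v$ must lie in $S_i$ to cover their edges, but since each pendant starts outside $S_s$ each such pendant requires at least one addition step, forcing $\ell+1$ additions within only $\ell$ total steps, a contradiction. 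This argument also shows the pendants never need to enter any intermediate set, so the capacity $k'$ is respected and the reduction is faithful.
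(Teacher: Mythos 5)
Your approach is essentially the one the paper uses: Baker-style layering with $\ell+1$ shifts, a pigeonhole argument on the at most $\ell$ touched vertices, pendant gadgets of size $\ell+1$ to force border vertices to stay in the cover, and an invocation of Theorem~\ref{th-vcr-tw} on each reduced instance. The differences are cosmetic. You use BFS layers where the paper peels outerplanarity layers from a plane embedding (Corollary~\ref{cor-planar-tw}); with BFS layers the treewidth bound for $\ell$ consecutive layers comes out around $3\ell+1$ rather than $3\ell-1$, so you should either switch to the outerplanarity layering or accept a slightly larger constant in the base of $(3\ell+2)^\ell$. You also account for the frozen cover vertices by lowering the capacity to $k'=k-|S_s\cap R_s|$, whereas the paper keeps $k$ and attaches an $(\ell+1)$-star per frozen cover vertex whose center is placed in both $S^*_{s,j}$ and $S^*_{t,j}$; both bookkeeping devices are fine.

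The one genuine gap is in the reverse direction of the correctness argument. You claim your pendant argument ``also shows the pendants never need to enter any intermediate set,'' but it only shows that not all $\ell+1$ pendants of a vertex can be added simultaneously. A reconfiguration sequence in $G'_s$ that the treewidth algorithm certifies may still add a single pendant and later remove it, and such a sequence does not lift to $G$ by simply holding $R_s$ fixed, because some of its steps touch vertices that do not exist in $G$. Since each pendant starts and ends outside the cover, it is touched an even number of times; the paper's Lemma~\ref{lem-vcr-planar} deletes these paired steps and then restores the sequence to length exactly $\ell$ by repeating the final step and its reversal, which matters because the positive results are stated for sequences of length exactly $\ell$. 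You need this short extra argument (or an argument that the dynamic program can be restricted to sequences avoiding gadget vertices) to close the reverse direction; everything else matches the paper's proof.
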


We note that, by a simple application of the result of Demaine et al.~\cite{DHK05},
Theorem~\ref{th-vcr-planar} generalizes to $H$-minor-free graphs and only
the constants of the overall running time of the algorithm are affected.

\subsection{An algorithm for OCT-R}\label{sec-octr-fvsr-tw}
In this section we show how known dynamic programming algorithms
for problems on graphs of bounded treewidth can be adapted to reconfiguration.
The general idea is to maintain a view of the reconfiguration
sequence just as we did for VC-R and in addition check if every
reconfiguration step gives a solution, which can be accomplished
by maintaining (independently for each step) any
information that the original algorithm would maintain.
We present the details for \textsc{OCT-R} (where $\Pi$ is the collection
of all bipartite graphs) as an example.

In a dynamic programming algorithm for \textsc{VC} on graphs of bounded treewidth, it is enough
to maintain information about what the solution's intersection with the bag can be.
This is not the case for \textsc{OCT}.
One algorithm for \textsc{OCT} works in time $\Ohstar(3^t)$ by additionally maintaining
a bipartition of the bag (with the solution deleted)~\cite{FHRV08,FG06}.
That is, at every bag $\bag_i$, we would maintain a list of
assignments $\bag\to\{\used,\ttleft,\ttright\}$ with the property
that there exists a subset $S$ of $V_i$ and a bipartition $L,R$ of
$G[V_i\setminus S]$ such that $X_i \cap S, X_i \cap L$, and $X_i \cap R$
are the $\used$, $\ttleft$, and $\ttright$ vertices, respectively.
A signature for \textsc{OCT-R} will hence additionally store a bipartition
for each step (except for the first and last sets $S_s$ and $S_t$, as we
already assume them to be solutions).

\begin{definition}\label{def-oct-sign}
An {\em OCT-signature} $\sig$ over a set $X \subseteq V(G)$ is a sequence of steps $\sig[1],\dots,\sig[\ell]\in X \cup \{\used,\unused\}$
together with an entry $\sig[i,v] \in \{\ttleft,\ttright\}$ for every
$1 \leq i \leq \ell-1$ and $v \in X\setminus \sig(i,S_s \cap X)$.
\end{definition}

There are at most $(t+3)^\ell 2^{t (\ell-1)}$ different OCT-signatures.
In the definition of validity, we replace the last condition with the following, stronger one:

\begin{enumerate}[label=(\arabic*),topsep=2pt,itemindent=10pt]
\item[(4)] For all $1\leq i \leq \ell-1$, the sets $\{v \mid \sig[i,v]=\ttleft\}$ and $\{v \mid \sig[i,v]=\ttright\}$ give a bipartition of $G[X \setminus\sig(p, S_s \cap X)]$.
\end{enumerate}

In the definition of the join operation, we additionally require
two signatures to have equal $\sig[i,v]$ entries (whenever defined) to be
considered compatible; the operation copies them to the new signature.
In the definition of the forget operation, we delete any $\sig[i,v]$
entries, where $v$ is the vertex being forgotten.
In the introduce operation, we consider (and check the validity of) a
different copy for each way of replacing $\unused$ steps with $v$ steps and each 
way of assigning $\{\ttleft,\ttright\}$ values to new $\sig[i,v]$
entries, where $v$ is the vertex being introduced.
As before, to each node we assign an initially empty table of
OCT-signatures and fill them bottom-up using these operations.
Lemma~\ref{lem-dp-induction}, with the
new definitions, can then be proved again by induction.

\begin{theorem}[*]\label{th-oct-tw}
\textsc{OCT-R} and \textsc{IBS-R} can be solved in
$\Ohstar(2^{t \ell} 4^{\ell} (t+3)^{\ell})$ time on graphs of treewidth $t$.
\end{theorem}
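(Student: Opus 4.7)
The plan is to adapt the dynamic programming algorithm for \textsc{VC-R} in Section~\ref{sec-vc-tw}, replacing ordinary signatures with OCT-signatures as defined in the excerpt, and verifying that (i) the modified operations preserve the invariant of the induction lemma, (ii) the state space and per-node work are bounded as claimed, and (iii) the reduction to \textsc{IBS-R} via Proposition~\ref{prop-dual} goes through unchanged.

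First, I would fix a sequence $\sigma\in\{-1,+1\}^\ell$ (losing a factor of $2^\ell$ absorbed by $4^\ell$) and, as in the \textsc{VC-R} algorithm, maintain for each node $i\in V(T)$ a table $\table_i$ of OCT-signatures over $\bag_i$. The bottom-up filling mirrors Section~\ref{sec-vc-tw}: leaves are populated with the signature whose steps are all \unused\ (extended to the single bag vertex via the introduce operation), introduce nodes apply the modified $introduce$, forget nodes apply the modified $forget$, and join nodes apply the modified $join$. The modifications follow the excerpt: at an introduce node for vertex $v$, for each child signature we enumerate every subset $I$ of $\unused$-indices to turn into $v$-steps, and independently every assignment in $\{\ttleft,\ttright\}^{\ell-1-|I|}$ for the new bipartition entries $\sig[i,v]$ at those $i$ for which $v\not\in\sig(i,S_s\cap X)$ after the change, keeping only those resulting OCT-signatures that are valid under the strengthened condition~(4). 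At a join node, two child OCT-signatures are compatible only if they agree on every bipartition entry defined in both, and $join$ copies these entries to the output; at a forget node for $v$, all entries $\sig[i,v]$ are discarded.

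The key step is reproving Lemma~\ref{lem-dp-induction} for OCT-signatures, now stating that $\sig \in \table_i$ iff $\sig$ extends to a valid OCT-signature over $V_i$ (where extension additionally means filling in bipartition entries for all vertices in $V_i\setminus \bag_i$). The induction is the same as for \textsc{VC-R}: for introduce nodes the enumeration is explicit; for forget nodes the strengthened condition~(4) restricted to $\bag_i\setminus\{v\}$ follows from condition~(4) at the child since $\Pi$ (bipartite graphs) is hereditary, and no new edges are forgotten because the tree-decomposition property guarantees all neighbors of $v$ in $V_j\setminus (\bag_j\setminus\{v\})$ are already in $V_j$; for join nodes, compatibility of the two partial bipartitions on $\bag_i$, combined with the tree-decomposition property (there are no edges of $G[V_i]$ between $V_j\setminus \bag_i$ and $V_h\setminus \bag_i$), ensures the two bipartitions glue into a single legal bipartition of each reconfiguration step over $V_i$. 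The root then accepts iff $\table_{root}$ contains an OCT-signature with no $\unused$ step, exactly as in the proof of Theorem~\ref{th-vcr-tw}.

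For the running time, the number of OCT-signatures over a bag of size at most $t+1$ is at most $(t+3)^\ell\cdot 2^{t(\ell-1)}$, which bounds the table sizes. The dominant cost per tree node is the introduce operation, where for each of the $(t+3)^\ell\cdot 2^{t\ell}$ entries in the child table we enumerate at most $2^\ell$ position subsets and at most $2^{\ell}$ bipartition assignments for the new vertex, contributing the $4^\ell$ factor; join nodes are processed within the same bound by indexing compatible signatures by their common bipartition profile and using the VC-R-style pairing within each group. Summing over the $\Oh(tn)$ nodes of the nice tree decomposition and over the at most $2^\ell$ admissible sequences $\sigma$ yields the claimed $\Ohstar(2^{t\ell}\,4^{\ell}\,(t+3)^{\ell})$ bound for \textsc{OCT-R}; \textsc{IBS-R} then follows immediately from Proposition~\ref{prop-dual}. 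The main obstacle I expect is the join-node analysis: both verifying that bipartition compatibility together with the tree-decomposition cut property really does glue the two partial bipartitions into one legal bipartition at every step, and arranging the pairing of compatible signatures efficiently enough not to blow up beyond the stated bound.
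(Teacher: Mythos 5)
Your proposal is correct and follows essentially the same route as the paper: the paper likewise reruns the \textsc{VC-R} induction with OCT-signatures, isolating exactly the two facts you flag as the crux --- that restricting a valid OCT-signature to a sub-bag preserves validity (its Lemma~\ref{lem-subsig-valid-oct}) and that two partial bipartitions agreeing on the shared bag glue into one because every edge of $G[V_j\cup V_h]$ lies in $G[V_j]$ or $G[V_h]$ (its Lemma~\ref{lem-oct-cover} and Corollary~\ref{cor-oct-cover-valid}) --- and the running-time accounting (signature count $(t+3)^\ell 2^{t(\ell-1)}$, introduce dominated by $2^\ell$ step-subsets times $2^\ell$ bipartition assignments, join handled by requiring equal $\sig[i,v]$ entries) matches the paper's. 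No substantive differences to report.
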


Similarly, using the classical $\Ohstar(2^{\Oh(t \log t)})$ algorithm
for \textsc{FVS} and \textsc{IF} (which maintains what partition of $X_i$ the
connected components of $V_i$ can produce), we can get the following
running times for reconfiguration variants of these problems.

\begin{theorem}\label{th-fvs-tw}
\textsc{FVS-R} and \textsc{IF-R}  can be solved in
$\Ohstar(t^{\ell t} 4^\ell (t+3)^{\ell})$
%$\Ohstar(2^{\Oh(\ell t \log t)})$
time on graphs of treewidth $t$.
\end{theorem}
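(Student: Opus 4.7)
The plan is to mirror the approach used for \textsc{OCT-R} in Section~\ref{sec-octr-fvsr-tw}, replacing the bipartition stored at each intermediate step by the additional information that the classical $\Ohstar(2^{\Oh(t\log t)})$ \textsc{FVS} algorithm maintains: for each bag $X_i$, a partition of its non-solution vertices recording which of them lie in the same connected component of the induced forest. Concretely, an \emph{FVS-signature} over $X$ consists of a sequence of steps $\sig[1],\dots,\sig[\ell] \in X \cup \{\used,\unused\}$ together with, for each $1 \leq i \leq \ell-1$, a partition $P_i$ of $X \setminus \sig(i, S_s \cap X)$. The number of such signatures is bounded by $(t+3)^\ell \cdot (t+1)^{t(\ell-1)} = \Oh^\star(t^{\ell t}(t+3)^\ell)$ since the number of partitions of a set of at most $t+1$ elements is at most $(t+1)^{t+1}$.

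Next, I would adapt the notion of validity. Conditions (1)--(3) of Definition~\ref{def-valid} remain unchanged, and condition (4) is replaced by: \emph{for every $1 \leq i \leq \ell-1$, the partition $P_i$ is the one induced by the connected components of some acyclic subgraph $F_i$ of $G[V_i \setminus \sig(i, S_s \cap V_i)]$ restricted to $X \setminus \sig(i,S_s \cap X)$}. Validity of an FVS-signature over a bag $X$ guarantees that the partial reconfiguration sequence restricted to $G[X]$ is consistent with being extendable to a genuine feedback vertex set reconfiguration sequence over $V_i$, in the same sense that the bipartition condition for \textsc{OCT-R} encoded the existence of a global 2-coloring.

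The three operations are modified exactly as in the classical \textsc{FVS} dynamic program. In the introduce operation for a vertex $v$, after a basic signature is produced by replacing some $\unused$ steps with $v$, we additionally enumerate, for each intermediate step $i$ in which $v$ is not in the (partial) solution, all ways to place $v$ either as a singleton class of $P_i$ or merged with any subset of classes containing a neighbor of $v$, rejecting the extension whenever merging would create a cycle (i.e. whenever two neighbors of $v$ already lie in the same class). In the forget operation, we delete the forgotten vertex from every $P_i$, dropping any empty class. In the join operation, two signatures are compatible only if their step sequences are compatible in the sense of Definition~\ref{def-join} and, additionally, for every $i$, the join partition obtained by taking the transitive closure of the union of $P_i^{(1)}$ and $P_i^{(2)}$ is acyclic in the sense that no two vertices ever share a class in both partitions while also being merged through a third; this cycle detection is the standard ingredient of the classical algorithm and is performed in time polynomial in $t$ and $\ell$.

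With these modifications, the analogue of Lemma~\ref{lem-dp-induction} is proved by the same induction on the tree decomposition, and the algorithm accepts when the root table contains an FVS-signature with no $\unused$ steps. Summing over the at most $2^\ell$ sequences $\sigma \in L$, and accounting for the $2^\ell$ blow-up of the introduce operation, the total running time is $\Ohstar(t^{\ell t} 4^\ell (t+3)^\ell)$. The same argument applied to \textsc{IF-R} follows via Proposition~\ref{prop-dual}. The main technical obstacle is correctly formulating and verifying the cycle-freeness condition in the join step simultaneously across all $\ell-1$ intermediate partitions; once this is phrased as the standard partition-merging rule applied independently to each coordinate, correctness reduces to the classical \textsc{FVS} case combined with the reconfiguration bookkeeping already established for \textsc{OCT-R}.
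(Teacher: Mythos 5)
Your proposal follows essentially the same route as the paper, which justifies this theorem only by the one-sentence remark that the classical partition-maintaining \textsc{FVS} dynamic program can be grafted onto the OCT-R signature machinery; you spell out exactly that adaptation (per-step partitions of the non-solution bag vertices, the standard cycle checks in introduce and join, the $(t+1)^{t+1}$ count per step, and the reduction to \textsc{IF-R} via Proposition~\ref{prop-dual}). The argument and the resulting $\Ohstar(t^{\ell t} 4^\ell (t+3)^{\ell})$ bound match the paper's intent, so this is a correct and in fact more detailed rendering of the paper's own proof sketch.
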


\section{Conclusion}\label{sec-con}
We have seen in Section~\ref{sec-octr-fvsr-tw} that, with only minor modifications,
known dynamic programming algorithms for problems on graphs of bounded
treewidth can be adapted to reconfiguration. It is therefore natural to ask
whether the obtained running times can be improved via more sophisticated algorithms which
exploit properties of the underlying problem or whether these running
times are optimal under some complexity assumptions.
Moreover, it would be interesting to investigate
whether the techniques presented for planar graphs can be
extended to other problems or more general classes of sparse graphs.
In particular, the parameterized complexity of ``non-local'' reconfiguration problems such as
\textsc{FVS-R} and \textsc{OCT-R} remains open even for planar graphs.

\bibliographystyle{abbrv}
\bibliography{references}

\appendix
\section*{Appendix}
\renewcommand{\thesubsection}{\Alph{subsection}}

\subsection{Details omitted from Section~\ref{sec-hardness}}
\subsubsection{Proof of Lemma~\ref{thue}}
\begin{proof}
We note that Bauer and Otto's explicit proof for
$c$-balanced Thue systems~\cite{BF84} can easily be adapted to
give a 2-balanced Thue system. We include a self-contained proof here for completeness.

Since only words of the same length can be reached by application of
rules in a balanced Thue system, it suffices to nondeterministically search all
words of the same length to solve the problem in
nondeterministic polynomial space.
By Savitch's Theorem \cite{S70}, this places the problem in PSPACE.

Let $M=(\Sigma,Q,q_0,q_{acc},q_{rej},\delta)$ be a deterministic Turing Machine working
in space bounded by a polynomial $p(|x|)$, where $p$ is a polynomial function and $x\in\Sigma^*$, which accepts any {PSPACE}-complete language.
(By starting from a fixed {PSPACE}-complete problem we show the word problem to
be hard for a certain fixed Thue system; starting from any language in PSPACE we would
only show that the more general word problem, where the system is given as input, is {PSPACE}-complete).
$\Sigma$ is the tape alphabet of $M$, $Q$ is the set of states, $q_0,q_{acc},q_{rej}$
are the initial, accepting, and rejecting state respectively,
and $\delta: Q\times \Sigma \to Q\times \Sigma \times \{\cdot,L,R\}$ is the transition function of $M$.
Let $\tapeL,\tapeR\in\Sigma$ denote the left and right end-markers.
We assume without loss of generality that the machine clears the tape and
moves its head to the left end when reaching the accepting state.

For any input $x\in\Sigma^*$ we encode a configuration of the Turing Machine by a
word of length exactly $p(|x|)$ over the alphabet $\Gamma = \Sigma \cup (\Sigma \times Q) \cup\{\tapeE\}$.
If the tape content is $\tapeL a_1 a_2 \dots a_n \tapeR$ for
some $a_1,\dots,a_n\in \Sigma$, the head's position is $i\in\{0,1,\dots,n+1\}$ and
the machine's state is $q$, then we define the corresponding word to be the tape
content padded with $\tapeE$ symbols and with $a_i$ replaced by $(q,a_i)$, that
is $\tapeL{}a_1\dots{}a_{i-1}(q, a_i)a_{i+1}\dots{}a_n\tapeR \tapeE\tapeE \dots \tapeE \in \Gamma^{p(|x|)}$.
The initial configuration is then encoded
as $s_x=(q_0,\tapeL){x}\tapeR\tapeE\tapeE\dots\tapeE\in \Gamma^{p(|x|)}$ and the only possible
accepting configuration is encoded as $t_x=(q_{acc},\tapeL)\tapeR\tapeE\tapeE\dots\tapeE \in \Gamma^{p(|x|)}$.
Since $M$ never uses more than $p(|x|)$ space on input $x$, our encoding is well
defined for all configurations appearing in the execution of $M$ on $x$.
So $M$ accepts input $x$ if and only if from $s_x$ one reaches the
configuration $t_x$ by repeatedly applying the transition function.
Such an application corresponds exactly to
the following (ordered) string rewriting rules, in the encodings:
\begin{itemize}
\item $\big( (q,a)c\ ,\ (p,b)c \big)$\quad for $q\in Q$,\ $a,c\in\Sigma$ and $\delta(q,a)=(p,b,\cdot)$,
\item $\big( (q,a)c\ ,\ b(p,c) \big)$\quad for $q\in Q$,\ $a,c\in\Sigma$ and $\delta(q,a)=(p,b,R)$,
\item $\big( c(q,a)\ ,\ (p,c)b \big)$\quad for $q\in Q$,\ $a,c\in\Sigma$ and $\delta(q,a)=(p,b,L)$.
\end{itemize}

The transition relation is not symmetric, but since the machine
$M$ is deterministic, the configuration digraph (with machine configurations
as vertices and the transition function as the adjacency relation) has out-degree one.
The configuration $t_x$ (which is a configuration in the accepting state)
has a loop, i.e. a directed edge from $t_x$ to $t_x$.
Therefore from any configuration, $t_x$ is reachable by a directed path
if and only if it is reachable by any path. This means that $M$ accepts
input $x$ if and only if applying the transition rules to $s_x$ leads to
$t_x$ if and only if $s_x \reach_R t_x$,  where $R$ is the symmetric
closure of the above rules, i.e., the 2-balanced Thue system over $\Gamma$ with rules:
\begin{itemize}
\item $\{(q,a)c,(p,b)c\}$\quad for $q\in Q$,\ $a,c\in\Sigma$ and $\delta(q,a)=(p,b,\cdot)$,
\item $\{(q,a)c , b(p,c)\}$\quad for $q\in Q$,\ $a,c\in\Sigma$ and $\delta(q,a)=(p,b,R)$,
\item $\{c(q,a) , (p,c)b\}$\quad for $q\in Q$,\ $a,c\in\Sigma$ and $\delta(q,a)=(p,b,L)$.
\end{itemize}
Since the map $x\mapsto (s_x,t_x)$ is computable in logarithmic space, this proves
the word problem of $(\Gamma,R)$ to be {PSPACE}-hard.
\qed
\end{proof}

\subsubsection{Proof of Lemma~\ref{hWordReconfiguration}}
\begin{proof}
We first need to slightly strengthen Lemma~\ref{thue} to give a Thue system where
only one symbol at a time can be changed. To that aim, it suffices to replace a
rule changing two symbols with a sequence of rules using two new intermediary symbols.
\begin{clm}\label{smallthue}
There is a 2-balanced Thue system $(\Gamma,R)$ whose word problem is
PSPACE-complete and such that for every rule $\{a_1a_2,b_1b_2\}\in R$ either $a_1=b_1$ or $a_2=b_2$.
\end{clm}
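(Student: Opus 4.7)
The plan is to transform the 2-balanced Thue system $(\Sigma,R)$ from Lemma~\ref{thue} into a new 2-balanced Thue system $(\Gamma,R')$ that enjoys the one-position-agreement property, while preserving PSPACE-completeness of the word problem. I would proceed by identifying the ``bad'' rules in $R$, namely those rules $\{a_1a_2,b_1b_2\}$ where $a_1\neq b_1$ \emph{and} $a_2\neq b_2$, and replacing each such rule with a sequence that routes through fresh intermediary symbols.

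For each bad rule $r=\{a_1a_2,b_1b_2\}$, I would introduce a fresh symbol $c^r$ (unique to $r$, not appearing in $\Sigma$ or in any other new rule) and replace $r$ by the three rules
\[
\{a_1a_2,\ c^r a_2\},\qquad \{c^r a_2,\ c^r b_2\},\qquad \{c^r b_2,\ b_1 b_2\}.
\]
Each of these rules changes only one of the two positions, so the property of the claim holds. The resulting system $(\Gamma,R')$, with $\Gamma=\Sigma\cup\{c^r:r\text{ bad}\}$, is still 2-balanced. PSPACE-membership follows verbatim from the argument in Lemma~\ref{thue} (nondeterministic search over words of equal length in polynomial space, plus Savitch's theorem). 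For PSPACE-hardness it suffices to show that for all $s,t\in\Sigma^*$ we have $s\reach_R t$ iff $s\reach_{R'} t$.

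The forward direction is immediate: each application of an old bad rule $r$ is simulated by three consecutive applications of its three replacement rules. The backward direction is the main obstacle. The key observation is that $c^r$ is fresh, so the only rules in $R'$ that can create or destroy an occurrence of $c^r$ are the three rules introduced for $r$, and no rule moves $c^r$ from its position. Thus each occurrence of a fresh symbol in a derivation has a well-defined \emph{lifecycle}: it is created by one of $\{a_1a_2,c^r a_2\}$ (forward) or $\{c^r b_2,b_1b_2\}$ (backward), possibly has its right neighbor toggled between $a_2$ and $b_2$ by middle-rule applications, and is eventually destroyed by one of the same two rules applied in the opposite direction (which must happen, since the final word $t$ lies in $\Sigma^*$). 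Lifecycles of different fresh-symbol occurrences are independent because they use distinct positions and, for different bad rules, distinct fresh letters.

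The final step is to project an $R'$-derivation onto an $R$-derivation by case-analysing each lifecycle: matching ``introduction by rule~1 forward'' with ``elimination by rule~3 forward'' yields one application of $r$; matching ``introduction by rule~1 forward'' with ``elimination by rule~1 backward'' yields no net effect; and the remaining two combinations yield either the reverse of $r$ or another null effect. Rule applications within $\Sigma$ during a lifecycle are simply carried over unchanged, and by an induction on the number of fresh-symbol occurrences the projection produces a valid $R$-derivation from $s$ to $t$. Formalising this interleaving and the independence of simultaneous lifecycles is the delicate part, but the argument is standard once the invariant ``$c^r$ never moves and only the three designated rules affect it'' is established.
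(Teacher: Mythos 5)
Your high-level plan (replace each bad rule by a chain through fresh intermediary symbols, then project $R'$-derivations back to $R$-derivations) is the same as the paper's, but your specific gadget is too weak and the backward direction of the equivalence is actually \emph{false} for it. The flaw is that your single intermediary $c^r$ masks only the first position of the rule while the genuine symbols $a_2,b_2$ remain exposed in the second position, where other rules of the system can act. Your lifecycle analysis assumes the right neighbour of $c^r$ is only toggled between $a_2$ and $b_2$ by your middle rule, but rules applied at the two positions to the \emph{right} of $c^r$ can also carry it from $a_2$ to $b_2$ without the middle rule, after which $c^r$ exits on the wrong side. Concretely, take $\Sigma=\{a_1,a_2,b_1,b_2,e,y,z\}$ (all distinct) and $R=\{r,g_1,g_2,g_3\}$ with the single bad rule $r=\{a_1a_2,b_1b_2\}$ and good rules $g_1=\{a_2y,ey\}$, $g_2=\{ey,ez\}$, $g_3=\{ez,b_2z\}$. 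In your system $R'$ one has
$$a_1a_2y \;\ruler_{R'}\; c\,a_2y \;\ruler_{R'}\; c\,ey \;\ruler_{R'}\; c\,ez \;\ruler_{R'}\; c\,b_2z \;\ruler_{R'}\; b_1b_2z ,$$
so $a_1a_2y\reach_{R'}b_1b_2z$; but the $\reach_R$-class of $a_1a_2y$ is exactly $\{a_1a_2y,\ b_1b_2y,\ a_1ey,\ a_1ez,\ a_1b_2z\}$, which does not contain $b_1b_2z$. Hence $s\reach_{R'}t$ does not imply $s\reach_R t$, and no amount of bookkeeping about lifecycles can repair the projection for this gadget.

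The paper's construction avoids exactly this trap by using \emph{two} fresh symbols and four rules, $a_1a_2\ruler Xa_2\ruler XY\ruler b_1Y\ruler b_1b_2$. The point is that at the moment the ``real'' change happens (the step $XY\ruler b_1Y$), \emph{both} positions are masked: $Y$ occurs as the first symbol of no rule side, so while position two holds $Y$ the context to the right is frozen and cannot observe or alter the intermediate state, and $X$ can only ever exit back to $a_1$ while $Y$ can only resolve to $b_2$. This makes the projection ($XY\mapsto a_1a_2$, remaining $X\mapsto a_1$, remaining $Y\mapsto b_2$) well defined on words, with every step mapping to an $R$-step or the identity. To fix your proof, adopt that gadget (or otherwise ensure the second position is also replaced by a fresh symbol during the transit of the first); the difference between handling all bad rules at once with distinct fresh symbols versus one at a time by induction is immaterial.
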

\begin{proof}
Let $(\Sigma,R)$ be the 2-balanced Thue system from Lemma~\ref{thue}.
Suppose $\{a_1a_2,b_1b_2\}$ is a rule of $R$ in which $a_1\neq b_1$ and
$a_2\neq b_2$. We construct a 2-balanced Thue system $(\Gamma,S)$ with one
fewer such rule, preserving {PSPACE}-completeness of the word problem.
The claim then follows inductively.

Let $\Gamma=\Sigma\cup\{X,Y\}$, where $X$ and $Y$ are new symbols which will be
used to replace a rule changing two symbols with a sequence of rules changing only one symbol.
Let $S = R \setminus \{\{a_1a_2,b_1b_2\}\} \cup \{\{a_1a_2,Xa_2\},\{X a_2, XY\}, \{XY,b_1 Y\}, \{b_1 Y, b_1 b_2\}\}$.
We show that for any $s,t\in\Sigma^*$ it holds that $s\reach_R t$ if and only if
$s\reach_S t$, which implies that our construction preserves {PSPACE}-completeness.

Clearly if $s\reach_R t$ then $s\reach_S t$, because replacing $a_1a_2$
with $b_1b_2$ can be done in $S$ by replacing $a_1a_2$ with $Xa_2$,
then $XY$, then $b_1 Y$ and finally $b_1b_2$. Suppose now $s\reach_S t$ for
some $s,t\in\Sigma^*$. Then there is a sequence $s=u_0,u_1,u_2,\dots,u_l=t$ of
words $u_i\in\Gamma^*$ such that $u_i \ruler_S u_{i+1}$.
Let $\phi : \Gamma^*\to \Sigma^*$ be defined by replacing all $XY$
substrings of a word with $a_1a_2$, then replacing all remaining $X$
symbols with $a_1$ and all remaining $Y$ symbols with $b_2$. It is easy
to check that $\phi(u_i)\ruler_R\phi(u_{i+1})$ or $\phi(u_i)=\phi(u_{i+1})$.
Since $\phi(u_0)=\phi(s)=s$ and $\phi(u_l)=\phi(t)=t$, this implies that $s\reach_R t$.
\qed
\end{proof}

Let $(\Gamma,S)$ be the 2-balanced Thue system from
Lemma~\ref{smallthue} (so if $\{a_1a_2,b_1b_2\}\in S$ then
$a_1=b_1$ or $a_2=b_2$). Let $S=\{S_1,\dots,S_m\}$.

Let $\tapeE,\tapeL,\tapeR,x_1,\dots,x_m$ be new symbols, let
$\Delta_1 = \{\tapeL,\tapeR,x_1,\dots,x_m\}$,
$\Delta_2 = (\Gamma\cup\{\tapeE\})\times(\Gamma\cup\{\tapeE\})$, and
let $\Delta = \Delta_1 \cup \Delta_2$. We will call $\Delta_1$ \emph{special symbols}
and $\Delta_2$ \emph{pair symbols}. Let $H=(\Delta,E$), where we
define $E\subseteq \Delta^2$ as the relation containing the following pairs
\begin{itemize}
\item $((a,b),(b,c))$\quad for any $a,b,c\in \Gamma$,
\item $(\tapeL, (\tapeE,a))$\quad for any $a\in \Gamma$,
\item $((a,\tapeE),\tapeR)$\quad for any $a,b,c\in \Gamma$,
\item $((\cdot,a_1),x_i)$,
\item $((\cdot,b_1),x_i)$,
\item $(x_i, (a_2,\cdot))$,
\item $(x_i, (b_2,\cdot))$ \quad for  any $\cdot\in\Gamma$ and $i\in\{1,\dots,m\}$ such that $S_i=\{a_1a_2,b_1b_2\}$.
\end{itemize}

Let $(s,t)\in \Gamma^*\times \Gamma^*$ be an instance of the word problem for
$S$, without loss of generality $|s|=|t|=n$. Define $\psi:\Gamma^n\to \Delta^{n+3}$ as
$$\psi(a_1a_2\dots a_n) = \tapeL (\tapeE,a_1)(a_1,a_2)(a_2,a_3)\dots(a_{n-1},a_n)(a_n,\tapeE) \tapeR$$
It is easy to see that if $s\reach_S t$ then $\psi(s)$ can be
transformed into $\psi(t)$, e.g., applying the rule $S_i=\{a_1 a_2, b_1 a_2\}$
corresponds to replacing $(\cdot, a_1)(a_1,a_2)(a_2,\cdot)$ by $(\cdot, a_1)x_i(a_2,\cdot)$, then
$(\cdot, b_1)x_i(a_2,\cdot)$, then $(\cdot, b_1)(b_1,a_2)(a_2,\cdot)$. We will
show the other direction, that if $\psi(s)$ can be transformed into $\psi(t)$, then
$s\reach_S t$. Since $\psi$ is computable in logarithmic
space, this will imply our claim of {PSPACE}-completeness.

Indeed, suppose that there is a sequence of $H$-words $\psi(s)=u_0,u_1,\dots,u_l=\psi(t)$
with $u_j\in\Delta^{n+3}$, such that $u_j$ differs from $u_{j+1}$ only at one position.
In any $H$-word $v=v_1 v_2\dots v_{n+3}\in\Delta^{n+3}$ there cannot be two
consecutive special symbols. We can thus define a word $\phi(v)$ of length $n$
over $\Gamma$ such that its $i$th symbol, for $i\in\{1,\dots,n\}$, is the second
element of $v_{i+1}$ if $v_{i+1}$ is a pair symbol and the first
element of $v_{i+2}$ if $v_{i+2}$ is a pair symbol (either case must hold
and if both do, the definitions agree by construction of $E$).
In particular $\phi(\psi(v))=v$ for any $v\in\Gamma^n$. We argue
that $\phi(u_{j-1})\reach_S \phi(u_{j})$ for $j\in\{1,\dots,l\}$.

Notice that the special symbol $\tapeL$ must precede a pair symbol
$(\tapeE,\cdot)$ for some $\cdot\in \Gamma$ and any such pair symbol
must be preceded by $\tapeL$. Since only one symbol at a time can
be changed, it follow inductively that for each $j\in\{0,\dots,l\}$ the first
two symbols of $u_j$ must be $\tapeL (\tapeE,\cdot)$ for
some $\cdot\in\Gamma$ and $\tapeL$ appears nowhere else. A similar argument
applies to the last two symbols, $(\cdot,\tapeE) \tapeR$ for some $\cdot\in\Gamma$.

Since $u_{j-1}$ and $u_j$ differ at only one position, there are
non-empty words $v,w\in\Delta^*$ and symbols $a,b\in\Delta$, $a\neq b$ such
that $u_{j-1}=vaw$ and $u_j=vbw$. If $a$ or $b$ is a special symbol then both
the last symbol of $v$ and the first symbol of $w$ are pair symbols, so
$\phi(u_{j-1})=\phi(u_{j})$. Otherwise, let $a=(a_1,a_2), b=(b_1,b_2)$. Assume
without loss of generality that $a_1\neq b_1$ and $a_2=b_2$
(the case $a_1=b_1,a_2\neq b_2$ is analogous and the case $a_1\neq b_1,a_2\neq b_2$
can be split by showing that $\phi(u_{j-1})\reach_S \phi(u')$ and
$\phi(u') \reach_S \phi(u_j)$ for $u'=v(b_1,a_2)w$, which can easily be checked to be an $H$-word).
If the last symbol of $v$ is a pair symbol $(c,d)$, then $d=a_1$ and $d=b_1$, contradicting
our assumption. If the last symbol of $v$ is $\tapeL$, then $a_1=b_1=\tapeE$. Finally
if the last symbol of $v$ is $x_i$ for some $i\in\{1,\dots,m\}$, then $S_i$
must equal $\{ca_1,c'b_1\}$ for some $c,c'\in\Gamma$. Since $a_1\neq b_1$, we
have $c=c'$ and the second-to-last symbol of $v$ must be a pair $(\cdot,c)$
for some $\cdot\in\Gamma\cup\{\tapeE\}$. Thus $\phi(v(b_1,a_2)w)$ is obtained
from $\phi(v(a_1,a_2)w)$ by replacing the symbol $a_1$ at position $|v|$, which is
preceded by a $c$, by the symbol $b_1$, that is, $\phi(v(b_1,a_2)w) \ruler_S \phi(v(a_1,a_2)w)$.
\qed
\end{proof}

\subsection{Details omitted from Section~\ref{sec-vc-tw}}
\subsubsection{Proof of Lemma~\ref{lem-dp-induction}}
\begin{proof}
We first prove a few statements about signature validity.
Note that all signatures in the algorithm are obtained through join, forget or introduce operations, which preserve validity and thus for each $i \in V(T)$, the table $\table_i$ contains only valid signatures over $X_i$.

\begin{lemma}\label{lem-subsig-valid}
If a signature $\sig$ over $X$ is obtained from a valid signature by replacing all vertex steps not in $X$ by $\used$ or $\unused$ steps,
then $\sig$ is valid as well.
\end{lemma}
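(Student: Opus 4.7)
The plan is to reduce each of the four validity conditions of Definition~\ref{def-valid} for $\sig$ to the corresponding condition for the original valid signature, which I will call $\extsig$, living over some set $Y \supseteq X$. The key preliminary step is a commutation invariant that I would establish first: for all $0 \leq i \leq \ell$,
\[ \sig(i, S_s \cap X) = \extsig(i, S_s \cap Y) \cap X. \]
Intuitively, ``execute $\sig$ and restrict to $X$'' coincides with ``execute $\extsig$ and restrict to $X$.'' This is exactly what one should expect, because $\sig$ differs from $\extsig$ only at positions where a vertex step on some $v \in Y \setminus X$ was replaced by a silent $\used$ or $\unused$ step, and such a step leaves both the left side (by construction) and the intersection on the right side (since $v \notin X$) unchanged.

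I would prove the invariant by a short induction on $i$: the base case follows from $(S_s \cap Y) \cap X = S_s \cap X$, and the inductive step splits into the case $\extsig[i] \in X$, in which both sides perform the same add/remove of a vertex in $X$, and the case $\extsig[i] \notin X$ (including the $\used$/$\unused$ possibilities), in which both sides are unchanged. Note that the choice between $\used$ and $\unused$ as the replacement symbol plays no role, since both are silent with respect to $\sig(\cdot, S_s \cap X)$.

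Given the invariant, verifying the four validity conditions for $\sig$ is routine. Conditions (1) and (2) transfer from $\extsig$ to $\sig$ because every vertex step of $\sig$ is also a vertex step of $\extsig$ on the same vertex in $X$, and the invariant at index $i-1$ equates the relevant membership tests. Condition (3) follows from the invariant at $i = \ell$ together with $\extsig(\ell, S_s \cap Y) = S_t \cap Y$. The only condition using a structural property of $\Pi$ is condition (4): one rewrites
\[ X \setminus \sig(i, S_s \cap X) \;=\; X \cap \bigl(Y \setminus \extsig(i, S_s \cap Y)\bigr), \]
so $G[X \setminus \sig(i, S_s \cap X)]$ is an induced subgraph of $G[Y \setminus \extsig(i, S_s \cap Y)] \in \Pi$, and hereditariness of $\Pi$ concludes the argument. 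I do not anticipate any substantial obstacle: once the commutation invariant is in place the rest is bookkeeping, and the main point worth emphasizing is that hereditariness of $\Pi$ is used in exactly one place, namely condition (4).
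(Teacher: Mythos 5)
Your proposal is correct and follows essentially the same route as the paper: the paper's proof also rests on the identity $\sig(i,S_s\cap X)=\sig'(i,S_s\cap X')\cap X$ (your commutation invariant), from which conditions (1)--(3) are immediate and condition (4) follows from hereditariness of $\Pi$. You merely spell out the induction behind the invariant, which the paper states without proof.
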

\begin{proof}
Let $\sig$ be obtained from a valid signature $\sig'$ over $X'$ by replacing all vertex steps in $X'\setminus X$ by $\used$ or $\unused$ steps.
First note that $\sig(i,S_s \cap X)=\sig'(i,S_s \cap X')\cap X$.
The first three conditions of Definition~\ref{def-valid} follow immediately.
As $\Pi$ is hereditary, $G[X'\setminus S]\in \Pi$ implies $G[X \setminus (S\cap X)]\in \Pi$, hence the fourth condition also follows.
\qed
\end{proof}

\begin{lemma}\label{lem-vc-cover}
Let $G$ be a graph $S,X_1,X_2$ be subsets of $V(G)$ such that every edge of $G[X_1 \cup X_2]$ is contained in $G[X_1]$ or $G[X_2]$.
If $S\cap X_1$ is a vertex cover of $G[X_1]$, $S\cap X_2$ is a vertex cover of $G[X_2]$ then $S$ is a vertex cover of $G[X_1 \cup X_2]$.
\end{lemma}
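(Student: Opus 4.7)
The plan is to verify the vertex-cover property of $S$ on $G[X_1 \cup X_2]$ directly from the definition, one edge at a time, reducing each edge to one of the two local covering hypotheses.

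First, I would fix an arbitrary edge $uv \in E(G[X_1 \cup X_2])$ and immediately invoke the hypothesis that every such edge is contained in $E(G[X_1]) \cup E(G[X_2])$. Up to relabeling, I may assume $uv \in E(G[X_1])$; in particular $u, v \in X_1$. Then, because $S \cap X_1$ is a vertex cover of $G[X_1]$, at least one of $u, v$ lies in $S \cap X_1$, and therefore in $S$. The case $uv \in E(G[X_2])$ is handled symmetrically using the second hypothesis. Since $uv$ was arbitrary, every edge of $G[X_1 \cup X_2]$ has an endpoint in $S$, so $S$ is a vertex cover of $G[X_1 \cup X_2]$.

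The only point worth underlining in the write-up is why the edge-containment hypothesis is needed: without it, an edge could have one endpoint in $X_1 \setminus X_2$ and the other in $X_2 \setminus X_1$, in which case neither $S \cap X_1$ nor $S \cap X_2$ is constrained to cover it, and the conclusion can fail. The hypothesis is exactly what is guaranteed in the intended application, where $X_1 = V_j$ and $X_2 = V_h$ are the vertex sets below the two children of a join node of a nice tree decomposition: any edge with both endpoints in $V_j \cup V_h$ must already appear in $G[V_j]$ or $G[V_h]$, since no other bag can contain both endpoints once the join bag $X_i = X_j = X_h$ is separated off.

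I do not expect any real obstacle here: the statement is essentially a purely set-theoretic unpacking of the vertex-cover definition, and the proof is a two-case argument on where the chosen edge lives. The lemma's interest is not its proof but its role as a gluing tool that justifies combining partial covers in the join step of the dynamic program of Section~\ref{sec-vc-tw}.
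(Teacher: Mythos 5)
Your proof is correct and follows exactly the paper's own argument: take an arbitrary edge of $G[X_1 \cup X_2]$, locate it in $G[X_1]$ or $G[X_2]$ by hypothesis, and cover it with $S \cap X_1$ or $S \cap X_2$ accordingly. The extra remarks on why the edge-containment hypothesis is needed and how it arises at join nodes are accurate but not part of the paper's proof.
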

\begin{proof}
Let $uv$ be an edge of $G[X_1 \cup X_2]$.
Then it is an edge of $G[X_i]$ for some $i\in\{1,2\}$.
Hence it one of $u,v$ must be a member of $S\cap X_i$.
Thus ever edge of $G[X_1 \cup X_2]$ has an endpoint in $S$.
\qed
\end{proof}

\begin{corollary}\label{cor-vc-cover-valid}
Let $\sig,\sig_1,\sig_2$ be a signatures over $X,X_1,X_2$ respectively, such that $X=X_1\cup X_2$ and every edge of $G[X]$ is contained in $G[X_1]$ or $G[X_2]$.
Assume furthermore that for all $i\leq\ell$:
\begin{itemize}[nosep]
\item[] $\sig[i]=\sig_1[i]$ whenever $\sig[i]\in X_1$ or $\sig_1[i]\in X_1$ and
\item[] $\sig[i]=\sig_2[i]$ whenever $\sig[i]\in X_2$ or $\sig_2[i]\in X_2$.
\end{itemize}
If $\sig_1$ and $\sig_2$ are valid, then so is $\sig$.
\end{corollary}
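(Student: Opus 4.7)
The plan is to verify the four conditions of Definition~\ref{def-valid} for $\sig$ one by one, leveraging the validity of $\sig_1$ and $\sig_2$. The entire argument pivots on a single identity which I would establish first and then invoke repeatedly: for each $j\in\{1,2\}$ and each $i\leq\ell$,
\[ \sig(i,S_s\cap X)\cap X_j \;=\; \sig_j(i,S_s\cap X_j). \]
I would prove this by induction on $i$. The base case $i=0$ is immediate since $S_s\cap X\cap X_j=S_s\cap X_j$. For the inductive step, the hypothesis on $\sig[i]$ versus $\sig_j[i]$ implies that $\sig$ acts on $X_j$ exactly when $\sig_j$ does, and with the same vertex: if $\sig[i]\in X_j$ then $\sig[i]=\sig_j[i]$ by assumption, and conversely if $\sig_j[i]\in X_j$ the same equality forces $\sig[i]\in X_j$. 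Otherwise both sequences leave the intersection with $X_j$ untouched (either because the step is $\used/\unused$, or because the affected vertex lies outside $X_j$). Thus the equality propagates step by step.

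With this identity in hand, conditions (1)--(3) are essentially bookkeeping. For (1), if $\sig[i]\in X$ and $\sigma[i]=-1$, pick $j$ with $\sig[i]\in X_j$, so $\sig[i]=\sig_j[i]$; validity of $\sig_j$ gives $\sig[i]\in\sig_j(i-1,S_s\cap X_j)=\sig(i-1,S_s\cap X)\cap X_j\subseteq\sig(i-1,S_s\cap X)$. Condition (2) is symmetric. For (3), the identity at $i=\ell$ together with validity of each $\sig_j$ yields $\sig(\ell,S_s\cap X)\cap X_j=S_t\cap X_j$ for both $j$; since $X=X_1\cup X_2$, taking the union gives $\sig(\ell,S_s\cap X)=S_t\cap X$.

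The main content lies in condition (4), which is where Lemma~\ref{lem-vc-cover} does real work (recall that for \textsc{VC-R}, $\Pi$ is the class of edgeless graphs, so $G[Y]\in\Pi$ means $V(G)\setminus Y$ contains a vertex cover of $G[X]$ when $Y=X\setminus S$). From validity of $\sig_j$, the set $\sig_j(i,S_s\cap X_j)$ is a vertex cover of $G[X_j]$; by the identity, this set equals $\sig(i,S_s\cap X)\cap X_j$. Hence $\sig(i,S_s\cap X)$ restricted to $X_j$ covers all edges of $G[X_j]$ for each $j$. Since every edge of $G[X]$ lies in $G[X_1]$ or $G[X_2]$ by hypothesis, Lemma~\ref{lem-vc-cover} applies and shows that $\sig(i,S_s\cap X)$ is a vertex cover of $G[X]$, i.e.\ $G[X\setminus\sig(i,S_s\cap X)]$ is edgeless and therefore in $\Pi$.

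The only subtle point, and the one I would be most careful about, is the equivalence $\sig[i]\in X_j \Leftrightarrow \sig_j[i]\in X_j$ used in the inductive step of the identity. This is exactly what the two bullet assumptions of the corollary are designed to guarantee, and checking that these assumptions cover both directions (not just one) is essential; everything else is a straightforward unfolding of definitions.
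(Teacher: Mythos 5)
Your proposal is correct and follows essentially the same route as the paper: both hinge on the identity $\sig_j(i,S_s\cap X_j)=\sig(i,S_s\cap X)\cap X_j$, derive conditions (1)--(3) from it, and invoke Lemma~\ref{lem-vc-cover} with $S=\sig(i,S_s\cap X)$ for condition (4). The only difference is that the paper asserts the key identity in one sentence while you spell out the induction on $i$ and the equivalence $\sig[i]\in X_j\Leftrightarrow\sig_j[i]\in X_j$, which is a reasonable amount of extra care rather than a different argument.
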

\begin{proof}
The assumption means that $\tau$ and $\tau_1$ agree over all changes within $X_1$, that is, $\tau_1(i,S_s \cap X_1) = \tau(i,S_s \cap X) \cap X_1$ (and similarly for $\tau_2$).
The first two conditions of Definition~\ref{def-valid} for $\sig$ follow immediately: if $\sig[i]\in X$ then $\sig[i]\in X_1$ or $\sig[i]\in X_2$,
so the statement is equivalent to the first two conditions for $\tau_1$ or for $\tau_2$.
To show the third condition for $\tau$, observe that 
$\tau(i,S_s \cap X)  = \left(\tau(i,S_s \cap X) \cap X_1\right) \cup \left(\tau(i,S_s \cap X) \cap X_2\right) = \tau_1(i,S_s \cap X_1) \cup \tau_2(i,S_s \cap X_2) = \left(S_t \cap X_1\right) \cup \left(S_t \cap X_2\right) = S_t \cap X$.
For the last condition, it suffices to use Lemma~\ref{lem-vc-cover} for $S=\tau(i,S_s \cap X)$.
\qed
\end{proof}

We now prove Lemma~\ref{lem-dp-induction}:
\emph{For $i \in V(T)$ and a signature $\sig$ over $\bag_i$, $\sig\in \table_i$
if and only if
$\sig$ can be extended to a signature over $V_i$ that is valid.}
We prove the statement by induction over the tree $T$, that is,
we prove the statement to be true at $i \in V(T)$ assuming we have already proved it for all other nodes in the subtree of $T$ rooted at $i$.
Depending on whether $i$ is a leaf, forget, introduce or join node,
we have the following cases. 
\smallskip

\noindent\textit{Leaf nodes.}
Let $v$ be the only vertex of $\bag_i$, that is, $V_i = \bag_i = \{v\}$.
Since $V_i=\bag_i$, a signature $\sig$ over $\bag_i$ can be extended to a signature valid over $V_i$ if and only if $\sig$ is valid and has no $\used$ steps.
That is, if and only if $\sig$ has only $\unused$ and $v$ steps and is valid (over $\bag_i$),
which happens if and only if $\sig \in \table_i$.
\smallskip

\noindent\textit{Forget nodes.}
Let $j$ be the child of $i$, thus
$\bag_i = \bag_j \setminus \{v\}$ for some $v \in \bag_j$ and $V_i=V_j$.

For one direction, suppose $\sig\in \table_i$ over $\bag_i$.
Then there is a $\sig_j$ in $\table_j$ over $\bag_j$ such that $\sig=forget(\sig_j,v)$.
By inductive assumption, $\sig_j$ has an extension $\extsig$ valid over $V_j=V_i$.
Since $\sig_j$ is be obtained from $\sig$ by replacing some $\used$ steps with $v$ steps, $\extsig$ is also an extension of $\sig$.
Thus $\sig$ has an extension valid over $V_i$.

For the other direction, suppose $\sig$ has an extension $\extsig$ valid over $V_i$.
Then $\extsig$ is obtained from $\sig$ by replacing some $\used$ steps with vertex steps from $V_i\setminus \bag_i$.
Since $V_i\setminus \bag_i = (V_j\setminus \bag_j) \cup \{v\}$,
we can consider the signature $\sig_j$ over $\bag_j \cup \{v\}$ obtained by only using the replacements with $v$ steps.
This signature $\sig_j$ can be extended to $\extsig$ by using the remaining replacements, so by inductive assumption $\sig_j\in \table_j$.
Furthermore, $forget(\sig_j,v) = \sig$.
Thus $\sig\in \table_i$.

\begin{center}
	\begin{tabular}{c|c@{\hskip 12pt}c@{\hskip 12pt}c@{\hskip 12pt}c}
	$\sig$   & \unused & \used & \used & $\bag_j$ \\
	$\sig_j$ & \unused & \used & $v$ & $\bag_j$\\
	$\extsig$  & \unused & $V_i\setminus \bag_i$ & $v$ & $\bag_j$\\
	\end{tabular}
\end{center}

\noindent\textit{Introduce nodes.}
Let $j$ be the child of $i$, thus
$\bag_i = \bag_j \cup \{v\}$ for some $v \in \bag_i$ and $V_i=V_j\cup\{v\}$.

For one direction, suppose $\sig\in \table_i$ is a signature over $\bag_i$.
Then there is a $\sig_j\in \table_j$ such that $\sig$ can be obtained from $\sig_j$ by replacing some $\unused$ steps with $v$ steps.
By inductive assumption $\sig_j$ has a extension $\extsig_j$ over $V_j$ that is valid.
As $\extsig_j$ can be obtained from $\sig_j$ by replacing $\used$ steps with vertex steps from $V_j\setminus \bag_j$ and $\sig$ has $\used$ steps at the same positions, we can use the same replacements to obtain an extension $\extsig$ over $V_j\cup\{v\}$ of $\sig$.
$\extsig$ agrees with $\extsig_j$ over $V_j$ and with $\sig$ over $\bag_i$,
it is thus valid over $V_i$ by Corollary~\ref{cor-vc-cover-valid}.
Therefore $\sig$ has an extension over $V_i$ that is valid.

For the other direction, suppose $\sig$ has an extension $\extsig$ over $V_i$ that is valid.
Let $\extsig_j$ be the signature over $V_j=V_i\setminus\{v\}$ obtained by replacing all $v$ steps of $\extsig$ with $\unused$ steps.
By Lemma~\ref{lem-subsig-valid}, $\extsig_j$ is valid.
Let $\sig_j$ be the signature over $\bag_j=\bag_i\setminus\{v\}$ obtained by replacing all $v$ steps of $\sig$ with $\unused$ steps.
Then $\extsig_j$ is an extension of $\sig_j$, thus $\sig_j\in \table_j$ by inductive assumption.
Since $\extsig$ is valid, so is $\sig$ (Lemma~\ref{lem-subsig-valid}), thus $\sig\in introduce(\sig_j, v)$ and $\sig \in \table_i$.

\begin{center}
	\begin{tabular}{c|c@{\hskip 12pt}c@{\hskip 12pt}c@{\hskip 12pt}c}
	$\sig$   & \unused & \used & $v$ & $\bag_j$ \\
	$\sig_j$ & \unused & \used & \unused & $\bag_j$\\
	$\extsig_j$  & \unused & $V_j\setminus \bag_j$ & \unused & $\bag_j$\\
	$\extsig$    & \unused & $V_j\setminus \bag_j$ & $v$ & $\bag_j$ \\
	\end{tabular}
\end{center}
\medskip

\noindent\textit{Join nodes.}
Let $j,h$ be the children of $i$, thus $V_i=V_j\cup V_h$ and we will write $\bag$ for $\bag_i=\bag_j=\bag_h$.

For one direction suppose $\sig\in \table_i$ valid over $\bag$.
Then there are two compatible signatures $\sig_j \in \table_j, \sig_h \in \table_h$ such that $\sig=join(\sig_j,\sig_h)$.
By inductive assumption, they have valid extensions, $\extsig_j$ over $V_j$ and $\extsig_h$ over $V_h$, respectively.
Let $I_i,I_j,I_h$ be the sets of indices of $\used$ steps in $\sig,\sig_j,\sig_h$, respectively.
By Definition~\ref{def-join}, $I_i$ is the sum of disjoint sets $I_j, I_h$.
Since $\extsig_j$ is obtained from $\sig_j$ by replacing steps at indices $I_j$ with vertex steps from $V_j\setminus \bag$ and similarly for $\extsig_h$, we can define a signature $\extsig$ obtained from $\sig$ over $\bag \cup (V_j\setminus \bag) \cup (V_h\setminus \bag) = V_i$ by using both sets of replacements.
$\extsig$ is an extension of $\sig$.
Moreover, $\extsig$ agrees with $\extsig_j$ over $V_j$ and with $\extsig_h$ over $V_h$, so by Corollary~\ref{cor-vc-cover-valid}, $\extsig$ is valid over $V_j \cup V_h = V_i$.
Therefore $\sig$ has a extension over $V_i$ that is valid.

For the other direction, suppose $\sig$ has an extension $\extsig$ over $V_i$ that is valid.
Let $I_j$ be the set of indices of vertex steps from $V_j\setminus \bag$ in $\extsig$ and define $I_h$ accordingly.
Let $\sig_j,\extsig_j$ be obtained from $\sig,\extsig$ by replacing all steps at indices $I_h$ by $\unused$ steps.
Since $V_j\cap V_h = \bag$, $\extsig_j$ is an extension of $\sig_j$ over $V_j$.
By Lemma~\ref{lem-subsig-valid} $\extsig_j$ is valid, thus by inductive assumption $\sig_j \in \table_j$.
Define $\sig_h,\extsig_h$ accordingly and observe that $\sig_h\in \table_h$.
It is easy to see that $\sig$ has $\used$ steps exactly at the indices $I_j \cup I_h$ and $\sig_j,\sig_h$ have $\used$ steps exactly at the disjoint sets of indices $I_j,I_h$, respectively. This implies $\sig_j,\sig_h$ are compatible and $\sig = join(\sig_j,\sig_h)$,
so $\sig \in A_i$.

\begin{center}
	\begin{tabular}{c|c@{\hskip 12pt}c@{\hskip 12pt}c@{\hskip 12pt}c}
	         &         & $I_j$ & $I_h$ & \\\hline
	$\sig$   & \unused & \used & \used & $\bag$ \\
	$\sig_j$ & \unused & \used & \unused & $\bag$\\
	$\sig_h$ & \unused & \unused & \used & $\bag$\\
	$\extsig_j$  & \unused & $V_j\setminus \bag$ & \unused & $\bag$\\
	$\extsig_h$  & \unused & \unused & $V_h\setminus \bag$ & $\bag$\\
	$\extsig$    & \unused & $V_j\setminus \bag$ & $V_h\setminus \bag$ & $\bag$ \\
	\end{tabular}
\end{center}
\qed
\end{proof}

\subsubsection{Proof of Theorem~\ref{th-vcr-tw}}
\begin{proof}
Recall that we say $G \models \omega_\sigma(S_s,S_t)$
if there is a reconfiguration sequence (of vertex covers of $G$) of length exactly $\ell$ from $S_s$ to $S_t$, such that the $i^{th}$ step is a vertex removal if $\sigma[i]=-1$ and a vertex addition if $\sigma[i]=+1$.
The following lemma states the correctness of the acceptance condition of our algorithm.

\begin{lemma}\label{lem-dp-correct}
$G \models \omega_\sigma(S_s,S_t)$
%There exists a reconfiguration sequence of length exactly $\ell$ from $S_s$ to $S_t$
if and only if $\table_{root}$ contains a signature $\sig$ over $\bag_{root}$
such that no step of $\sig$ is $\unused$.
\end{lemma}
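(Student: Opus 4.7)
The plan is to reduce this lemma directly to Lemma~\ref{lem-dp-induction} applied at the root, exploiting the fact that $V_{root} = V(G)$ so that global feasibility coincides with the local validity condition at the bag $V(G)$. The key observation is that there is a natural bijection between signatures over $V(G)$ in which every step is a vertex step and reconfiguration sequences from $S_s$ to $S_t$ that follow $\sigma$: given such a signature $\extsig$, the sets $\extsig(0,S_s),\dots,\extsig(\ell,S_s)$ form the sequence, and conversely each sequence determines the vertex changed at each step.

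For the ``only if'' direction, given a reconfiguration sequence $S_s = R_0, \dots, R_\ell = S_t$ in which step $i$ adds or removes the vertex $v_i$ dictated by $\sigma[i]$, I define $\extsig$ over $V(G)$ by $\extsig[i] := v_i$. A short check against Definition~\ref{def-valid} shows $\extsig$ is valid: conditions (1)--(3) are immediate from the fact that the sequence is well-defined and ends at $S_t$, and condition (4) at $X = V(G)$ asserts $G[V(G) \setminus R_i] \in \Pi$, which holds because every $R_i$ is feasible. Then I project $\extsig$ to $\bag_{root}$ by relabeling each step $\extsig[i] \in V(G) \setminus \bag_{root}$ as $\used$, producing a signature $\sig$ that contains no $\unused$ step and of which $\extsig$ is an extension; Lemma~\ref{lem-dp-induction} therefore places $\sig \in \table_{root}$.

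For the converse, suppose $\sig \in \table_{root}$ has no $\unused$ step. Lemma~\ref{lem-dp-induction} supplies a valid extension $\extsig$ over $V_{root} = V(G)$, and since extensions only replace $\used$ steps with vertex steps from $V(G) \setminus \bag_{root}$, the absence of $\unused$ steps is preserved: every step of $\extsig$ is a vertex step. Setting $R_i := \extsig(i, S_s)$, validity conditions (1) and (2) ensure $R_i$ is obtained from $R_{i-1}$ by the single vertex change prescribed by $\sigma[i]$, condition (3) gives $R_\ell = S_t$, and condition (4) at $X = V(G)$ certifies $G[V(G) \setminus R_i] \in \Pi$, i.e.\ each $R_i$ is feasible. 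This yields a reconfiguration sequence witnessing $G \models \omega_\sigma(S_s, S_t)$. The heavy lifting is already encapsulated in Lemma~\ref{lem-dp-induction}, and I do not anticipate any genuine obstacle: the proof is essentially an unpacking of the ``extension'' and ``valid signature'' definitions at the root, using only the trivial fact that $\bag \setminus S = V(G) \setminus S$ when the bag is all of $V(G)$.
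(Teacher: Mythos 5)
Your proof is correct and takes essentially the same route as the paper's: both directions reduce to Lemma~\ref{lem-dp-induction} applied at the root, using $V_{root}=V(G)$ so that a valid all-vertex-step extension $\extsig$ over $V(G)$ is exactly (via $\extsig(0,S_s),\dots,\extsig(\ell,S_s)$) a reconfiguration sequence following $\sigma$. The paper's version is merely terser, leaving implicit the projection/extension bookkeeping and the condition-by-condition check of Definition~\ref{def-valid} that you spell out.
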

\begin{proof}
From Lemma~\ref{lem-dp-induction}, we know that 
$\table_{root}$ contains a signature $\sig$ over $\bag_{root}$ such that no step of $\sig$ is $\unused$
if and only if there is a signature $\extsig$ over $V_{root}=V$ that is valid and such that no step of $\extsig$ is $\unused$.
This means that  $\extsig$ contains only vertex steps and by definition of validity, the corresponding
sequence $\extsig(0,S_s),\dots,\extsig(\ell,S_s)$ is a reconfiguration sequence 
of length exactly $\ell$ from
$S_s$ to $S_t$ such that the $i^{th}$ step is a vertex removal if $\sigma[i]=-1$ and a vertex addition if $\sigma[i]=+1$.
\qed
\end{proof}

It remains to prove the bound on the running time of our algorithm.
The number of nodes in $T$ is in $\Oh(n)$. Checking the compatibility
and validity of signatures can be accomplished in time polynomial in $\ell,t,n$.
For each node $i \in V(T)$ the table $\table_i$ contains at most $(t+3)^\ell$ signatures.
Updating tables at the leaf nodes requires $\Ohstar(2^\ell)$ time, since we
check the validity of $2^\ell$ signatures obtained from one introduce operation.
In the worst case, updating the table of an introduce node requires
$\Ohstar(2^\ell (t+3)^\ell)$ time, i.e. applying the introduce operation on each signature in
a table of size $(t+3)^\ell$. For forget nodes, the time spent
is polynomial in the maximum size of a table, that is $\Ohstar((t+3)^\ell)$.
Finally, updating the table of a join node can be implemented in 
$\Ohstar(2^\ell (t+3)^{\ell})$ time by checking for each of the $(t+3)^\ell$ possible signatures all possible ways to split $\used$ steps among the two children.
The algorithm needs to be run for every $\sigma \in \{-1,+1\}^\ell$ that doesn't violate the maximum allowed capacity, giving in total the claimed  $\Ohstar(4^\ell (t+3)^\ell)$ time bound.

Given an instance $(G, S_s, S_t, k, \ell)$ of \textsc{IS-R},
we can solve the corresponding \textsc{VC-R} instance
$(G, V(G) \setminus S_s, V(G) \setminus S_t, n - k, \ell)$ in
$\Ohstar(4^\ell (t+3)^\ell)$ time on graphs
of treewidth $t$. Combining this fact with Proposition~\ref{prop-dual} yields
the result for \textsc{IS-R}.
\qed
\end{proof}

\subsection{Details omitted from Section~\ref{sec-vcr-shifting}}
\subsubsection{Proof of Theorem~\ref{th-vcr-planar}}
\begin{proof}
Given a plane embedding of a planar graph $G$,
the vertices of $G$ are divided into layers $\{L_1, \ldots , L_r\}$
as follows: Vertices incident to the exterior face are in layer $L_0$.
For $i \geq 0$, we let $G'$ be the graph obtained by deleting
all vertices in $L_0 \cup \ldots \cup L_i$ from $G$. All the vertices that
are incident to the exterior face in $G'$ are in layer $L_{i + 1}$ in $G$.
$L_r$ is thus the last non-empty layer. A planar graph
that has an embedding where the vertices are in $r$ layers is
called {\em $r$-outerplanar}. The following result is due to Bodlaender~\cite{B98}.

\begin{lemma}[Bodlaender~\cite{B98}]\label{lem-planar-tw}
The treewidth of an $r$-outerplanar graph $G$ is at most $3r - 1$.
Moreover, a tree decomposition of width at
most $3r - 1$ can be constructed in time polynomial in $|V(G)|$.
\end{lemma}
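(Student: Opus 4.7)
The plan is to proceed by induction on $r$, the number of layers in the outerplanar hierarchy. For the base case $r = 1$, the graph $G$ is outerplanar, and it is classical that every outerplanar graph has treewidth at most $2$: triangulating the interior of the outer face (adding chords but no Steiner vertices) yields a maximal outerplanar graph, which is a $2$-tree, and the canonical tree decomposition of a $2$-tree has bags of size $3$, matching the bound $3 \cdot 1 - 1 = 2$. The decomposition is readable off the triangulation in linear time from any planar embedding.

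For the inductive step, I would assume the statement for $(r-1)$-outerplanar graphs. Given an $r$-outerplanar graph $G$ with layering $L_0, L_1, \ldots, L_{r-1}$, the subgraph $G' = G - L_0$ is $(r-1)$-outerplanar (its outer face is now bounded by vertices of $L_1$), so by the inductive hypothesis it admits a tree decomposition $\mathcal{T}' = (T', \chi')$ of width at most $3r - 4$, constructible in polynomial time. The goal is to extend $\mathcal{T}'$ to a tree decomposition of $G$ by inserting vertices of $L_0$ into bags in a controlled way, adding at most three new vertices to each bag, so that the final width is $(3r - 4) + 3 = 3r - 1$.

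The geometric ingredient driving the augmentation is that the outer face of $G$ is bounded by a closed walk through $L_0$, and by planarity each $v \in L_0$ has its neighbors in $L_1$ appearing consecutively along the boundary of the ``hole'' of $G'$ that borders the outer face of $G$. I would walk along the outer boundary of $G$, and for each $v \in L_0$ locate a connected subtree of $T'$ whose bags collectively cover $N(v) \cap L_1$; then $v$ is inserted into precisely those bags, together with at most two ``active'' neighbors of $v$ along the outer boundary to witness the $L_0$--$L_0$ edges. A sliding-window argument over the boundary traversal shows that at any bag at most three vertices of $L_0$ are simultaneously present: the currently processed vertex and the two boundary vertices flanking the window over $L_1$-neighborhoods.

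The main obstacle is reconciling the cyclic nature of the outer boundary with the acyclic structure of $T'$: one must select a ``cut'' vertex of the outer face to break the cycle, and then verify that after augmentation the subtree of $T$ induced by each vertex of $G$ is connected and every edge of $G$ lies in some bag. Planarity, via the consecutive-arc property above, is what makes both invariants survive. The resulting construction runs in polynomial time since each layer is processed by a single boundary traversal and a linear number of bag insertions, completing the induction.
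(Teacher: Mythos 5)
The paper does not prove this lemma at all: it is imported verbatim as a known result of Bodlaender~\cite{B98}, so there is no in-paper argument to compare against. Judged on its own terms, your attempt has the right shape (peel one layer, pay $+3$ in width) and a correct base case, but the inductive step contains a genuine gap. The inductive hypothesis only hands you \emph{some} tree decomposition $\mathcal{T}'$ of $G - L_0$ of width $3r-4$; it carries no information about how the vertices of $L_1$ are distributed among its bags. Your sliding-window argument needs the bags of $T'$ to be arranged compatibly with the cyclic order of $L_1$ along the outer face of $G - L_0$ — otherwise the minimal connected subtrees of $T'$ covering the arcs $N(v)\cap L_1$ for distinct $v\in L_0$ can all be forced through a single central bag, and that bag then receives far more than three vertices of $L_0$. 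Nothing in the induction guarantees this compatibility, and this is exactly the point where the naive layer-peeling proof breaks; it is why Bodlaender's actual argument takes a different route (reduce to maximum degree $3$ by vertex expansion, then build the decomposition from a spanning forest with bounded vertex- and edge-remember numbers), rather than inducting on layers with an unconstrained hypothesis.

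To repair your approach you would have to strengthen the statement being proved — e.g., require that the decomposition of the $(r-1)$-outerplanar graph contain a path (or suitably ordered subtree) along which the outer-face vertices appear in their cyclic boundary order — and then verify that this stronger invariant is itself preserved when $L_0$ is reinserted; that is a substantial piece of work, not a routine verification. A secondary, more local issue: you account for $L_0$--$L_0$ edges by keeping ``at most two active neighbors'' of $v$ along the outer boundary, but $G[L_0]$ is an arbitrary outerplanar graph and may contain chords, so a vertex of $L_0$ can have many $L_0$-neighbors that are not adjacent to it on the boundary walk; these edges are not covered by your construction as described.
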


From Lemma~\ref{lem-planar-tw}, we have the following corollary.

\begin{corollary}[\cite{B98,BK08}]\label{cor-planar-tw}
For a planar graph $G$, we let $\mathcal{E}$ be an arbitrary
plane embedding of $G$ and $\{L_1, \ldots , L_r\}$ be the
collection of layers corresponding to $\mathcal{E}$.
Then for any $i, \ell \geq 1$, the treewidth of the subgraph
$G[L_{i+1} \cup \ldots \cup L_{i+\ell}]$ is at most $3\ell - 1$.
\end{corollary}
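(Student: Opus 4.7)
The plan is to show that the induced subgraph $H = G[L_{i+1} \cup \ldots \cup L_{i+\ell}]$ is $\ell$-outerplanar with respect to the plane embedding inherited from $\mathcal{E}$, after which Lemma~\ref{lem-planar-tw} immediately gives the bound $tw(H) \leq 3\ell - 1$. So the whole task reduces to identifying the layer decomposition of $H$ inside its induced embedding.

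First I would endow $H$ with the plane embedding obtained by simply restricting $\mathcal{E}$ to the vertex set of $H$; this is well-defined since deleting vertices of a plane graph yields a plane graph. Then I would claim that the layer decomposition of $H$ under this embedding is exactly $L_{i+1}, L_{i+2}, \ldots, L_{i+\ell}$ in order. The target is proved by induction on $j \in \{0, 1, \ldots, \ell-1\}$: the outer face of the subgraph $H_j := H - (L_{i+1} \cup \ldots \cup L_{i+j}) = G[L_{i+j+1} \cup \ldots \cup L_{i+\ell}]$ is bounded exactly by the vertices of $L_{i+j+1}$.

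The inductive step compares $H_j$ with the graph $G_j := G - (L_0 \cup \ldots \cup L_{i+j})$. By the definition of the layer decomposition of $G$ given in the excerpt, the outer face of $G_j$ is bounded exactly by $L_{i+j+1}$. Now $H_j$ is obtained from $G_j$ by deleting the vertex set $L_{i+\ell+1} \cup \ldots \cup L_r$, and every vertex in this set is strictly interior in $G_j$, i.e., not on the outer face of $G_j$ (again by the layer definition). The key plane-topological observation is that deleting a vertex $v$ from a plane graph only merges the faces incident to $v$; if none of those faces is the unbounded (outer) face, then the outer face is preserved verbatim. Applying this to each interior vertex we remove in turn, the unbounded face of $H_j$ coincides with that of $G_j$, so its boundary is precisely $L_{i+j+1}$. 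This completes the induction and shows that $H$ is $\ell$-outerplanar.

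The only real obstacle is justifying the plane-topological claim that deletion of an interior vertex leaves the outer face unchanged; I would handle this by observing that the faces of a plane graph incident to $v$ merge into a single face upon removing $v$, and that this merged face is bounded whenever all incident faces were bounded. Given this, the conclusion $tw(H) \leq 3\ell - 1$ is immediate from Lemma~\ref{lem-planar-tw}, and the ``Moreover'' clause of that lemma additionally yields a tree decomposition of $H$ of width at most $3\ell - 1$ constructible in polynomial time, which will be what the shifting algorithm of Section~\ref{sec-vcr-shifting} consumes.
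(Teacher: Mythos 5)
Your proof is correct and takes essentially the same route the paper intends: the paper states the corollary as an immediate consequence of Lemma~\ref{lem-planar-tw} (citing \cite{B98,BK08}) without writing the argument out, and the content of that implicit argument is precisely your claim that $G[L_{i+1} \cup \cdots \cup L_{i+\ell}]$ is $\ell$-outerplanar under the inherited embedding. Your justification that deleting vertices not incident to the outer face preserves the outer face verbatim, so that the layer structure of the induced subgraph is exactly $L_{i+1}, \ldots, L_{i+\ell}$, correctly fills in the topological details the paper leaves to the reader.
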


We now summarize the main ideas behind how we use the shifting technique.
Note that every vertex in $S_s \Delta S_t$ must be
touched at least once in any reconfiguration sequence $\alpha$
from $S_s$ to $S_t$. In other words, $S_s \Delta S_t \subseteq V(\alpha)$. Moreover,
we know that $|V(\alpha)|$ is at most $\ell$, as otherwise the corresponding \textsc{VC-R}
instance is a no-instance. For an arbitrary plane embedding of a planar graph $G$
and every fixed $j \in \{0, \ldots, \ell\}$,
we let $G_j$ be the graph obtained by deleting all vertices
in $L_{i(\ell + 1)+j}$, for all $i \in \{0, 1, \ldots, \left \lfloor{{n / (\ell + 1)}}\right \rfloor\}$.
Note that $tw(G_j) \leq 3\ell - 1$.

\begin{proposition}\label{prop-planar-subgraph}
If there exists a reconfiguration sequence $\alpha$ of length exactly $\ell$
between two vertex covers $S_s$ and $S_t$ of a planar graph $G$,
then for some fixed $j \in \{0, \ldots, \ell\}$
we have $V(\alpha) \subseteq V(G_j)$.
\end{proposition}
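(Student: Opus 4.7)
The plan is to use a simple counting/pigeonhole argument over the $\ell+1$ choices of $j$. First I would observe that $V(\alpha)$, the set of vertices touched by some step of the reconfiguration sequence, has size at most $\ell$, since $\alpha$ consists of exactly $\ell$ steps and each step touches a single vertex (this is already noted in the preceding paragraph of the paper).

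Next, I would examine precisely which layers are deleted to form each $G_j$. By definition, $G_j$ is obtained from $G$ by removing the layers $L_{i(\ell+1)+j}$ for all valid $i \geq 0$. Crucially, every vertex $v \in V(G)$ belongs to exactly one layer $L_m$, and $v$ is deleted in $G_j$ if and only if $m \equiv j \pmod{\ell+1}$. Since $j$ ranges over $\{0, 1, \ldots, \ell\}$, the residue $m \bmod (\ell+1)$ uniquely determines the single value of $j \in \{0,\ldots,\ell\}$ for which $v \notin V(G_j)$. In other words, each vertex of $G$ is absent from exactly one of the $\ell+1$ graphs $G_0, G_1, \ldots, G_\ell$, and is present in all the others.

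Now I would apply pigeonhole. Each vertex $v \in V(\alpha)$ ``blocks'' at most one value of $j$, namely the unique $j$ with $v \notin V(G_j)$. Since $|V(\alpha)| \leq \ell$ and there are $\ell+1$ candidate values of $j$, at least one $j^\star \in \{0, \ldots, \ell\}$ is not blocked by any vertex of $V(\alpha)$. For this $j^\star$, every vertex of $V(\alpha)$ is present in $G_{j^\star}$, i.e.\ $V(\alpha) \subseteq V(G_{j^\star})$, which is the claim.

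There is really no main obstacle here; the statement is a clean pigeonhole consequence of the shifting construction. The only small care needed is to confirm that the ``layers deleted in $G_j$'' form disjoint sets across $j \in \{0,\ldots,\ell\}$, which follows immediately from the fact that the indices $i(\ell+1)+j$ have distinct residues modulo $\ell+1$ for distinct $j$ in this range. Once this is in place, the bound $|V(\alpha)| \leq \ell < \ell+1$ furnishes the required $j^\star$.
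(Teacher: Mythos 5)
Your proof is correct and takes essentially the same approach as the paper, which in fact states this proposition without an explicit proof and relies on precisely the pigeonhole observation you spell out: each vertex lies in exactly one layer and is therefore absent from at most one $G_j$, so $|V(\alpha)| \leq \ell < \ell+1$ leaves some $j^\star$ with $V(\alpha) \subseteq V(G_{j^\star})$. No gaps.
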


We still need a few gadgets
before we can apply Theorem~\ref{th-vcr-tw} on each graph $G_j$ and guarantee correctness.
In particular, we need to handle deleted vertices and ``border'' vertices correctly, i.e. vertices
incident to the exterior face in $G_j$.

We solve at most $\left \lfloor{{n / (\ell + 1)}}\right \rfloor + 1$
instances of the \textsc{VC-R} problem as follows:

\begin{itemize}
\item[1.] Find an arbitrary plane embedding of $G$.
\item[2.] For every fixed $j \in \{0, \dots, \ell\}$:
\item[3.] \hspace*{5mm} Let $G^*_j = G_j$.
\item[4.] \hspace*{5mm} Let $D^*_j$ denote the set of vertices deleted from $G$ to obtain $G^*_j$.
\item[5.] \hspace*{5mm} If $\{S_s \Delta S_t\} \cap D^*_j \neq \emptyset$:
\item[6.] \hspace*{10mm} Ignore this instance (continue from line 2).
\item[7.] \hspace*{5mm} Partition $D^*_j$ into $A^*_j = D^*_j \cap \{S_s \cap S_t\}$ and $B^*_j = D^*_j \setminus A^*_j$.
\item[8.] \hspace*{5mm} Let $S^*_{s,j} = S_s \cap V(G^*_j)$ and $S^*_{t,j} = S_t \cap V(G^*_j)$.
\item[9.] \hspace*{5mm} If $\{v \in S^*_{s,j} \Delta S^*_{t,j} \mid |N_G(v) \cap B^*_j| > 0\} \neq \emptyset$:
\item[10.] \hspace*{10mm} Ignore this instance (continue from line 2).
\item[11.] \hspace*{5mm} For every vertex $v \in A^*_j$:
\item[12.] \hspace*{10mm} Add an $(\ell + 1)$-star centered at $u$ to $G^*_j$.
\item[13.] \hspace*{10mm} Add $u$ to $S^*_{s,j}$ and $S^*_{t,j}$.
\item[14.] \hspace*{5mm} For every vertex in $\{v \in \{S^*_{s,j} \cap S^*_{t,j}\} \mid |N_G(v) \cap B^*_j| > 0\}$:
\item[15.] \hspace*{10mm} Add $\ell + 1$ degree-one neighbors to $v$ in $G^*_j$
\item[16.] \hspace*{5mm} Solve instance $(G^*_j, S^*_{s,j}, S^*_{t,j}, k, \ell)$.
\end{itemize}

On lines 5 and 6, we make sure that no vertices from the symmetric difference
of $S_s$ and $S_t$ lie in the deleted layers of $G$, as otherwise $G^*_j$
can be ignored, by Proposition~\ref{prop-planar-subgraph}.
Hence, we know that $D^*_j$ can only include
vertices common to both $S_s$ and $S_t$ (vertices in $S_s \cap S_t$) and we can
partition $D^*_j$ into two sets accordingly (line 7).
In the remaining steps, we add gadgets to account for the capacity
used by vertices in $A^*_j$ and the fact that the neighbors of any
vertex in $B^*_j$ must remain untouched. In other words, we
assume that there exists a reconfiguration
sequence $\alpha$ from $S^*_{s,j}$ to $S^*_{t,j}$ in $R_{\textsc{min}}(G^*_j,k)$.
Then $\alpha$ is a reconfiguration sequence from $S_s$ to $S_t$ in
$R_{\textsc{min}}(G,k)$ only if:\\
\hspace*{2em}(1) $|S^*_{s,j}| + capacity(\alpha) \leq k - |A^*_j|$,\\
where $capacity(\alpha) = \max_{1 \leq \ell' \leq \ell}(\sum_{i = 1}^{\ell'}{sign(\alpha, i)})$ and $sign(\alpha, i)$ is -1 when the $i^{th}$ step of $\alpha$ is a deletion, +1 when it is an addition; and\\
\hspace*{2em}(2) no vertex deletion in $\alpha$ leaves an edge uncovered in $G$.\\
To guarantee property (1), we add an $(\ell + 1)$-star to $G^*_j$ for every vertex in $A^*_j$
then add the center of the star into both $S^*_{s,j}$ and $S^*_{t,j}$ (lines 11, 12, and 13).
Therefore, for every value of $j$ we have $|S| = |S^*_{s,j}|$,
$|T| = |S^*_{t,j}|$, and $|S^*_{s,j}| + capacity(\alpha) \leq k - |A^*_j|$.
For property (2), we add $\ell + 1$ degree-one neighbors to
every vertex in $\{v \in \{S^*_{s,j} \cap S^*_{t,j}\} \mid |N_G(v) \cap B^*_j| > 0\}$ (lines 14 and 15).
Those vertices, as well as the centers of the stars, will have to remain untouched in $\alpha$, as otherwise
deleting any such vertex would require more than $\ell$ additions.

Since adding degree-one vertices and $(\ell + 1)$-stars to a graph does not
increase its treewidth, we have $tw(G^*_j) \leq 3\ell - 1$ for all $j$ (Corollary~\ref{cor-planar-tw}).
Hence, for each graph $G^*_j$ we can now apply Theorem~\ref{th-vcr-tw} and solve the
\textsc{VC-R} instance $(G^*_j, S^*_{s,j}, S^*_{t,j}, k, \ell)$
in $\Ohstar(4^{\ell} (3\ell + 1)^{\ell})$ time.
We prove in Lemma~\ref{lem-vcr-planar} that our original instance on planar $G$
is a yes-instance if and only if $(G^*_j, S^*_{s,j}, S^*_{t,j}, k, \ell)$
is a yes-instance for some fixed $j \in \{0, 1, \ldots, \left \lfloor{{n / (\ell + 1)}}\right \rfloor\}$.

\begin{lemma}\label{lem-vcr-planar}
$(G, S_s, S_t, k, \ell)$ is a yes-instance of \textsc{VC-R} if and only if
$(G^*_j, S^*_{s,j}, S^*_{t,j}, k, \ell)$ is a yes-instance for some fixed
$j \in \{0, 1, \ldots, \left \lfloor{{n / (\ell + 1)}}\right \rfloor\}$.
\end{lemma}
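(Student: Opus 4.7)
The plan is to transfer reconfiguration sequences in both directions between $G$ and $G^*_j$ for a suitable $j$, with the star gadgets and the degree-one attachments standing in for the deleted vertices.

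For the forward direction, suppose $\alpha$ is a length-$\ell$ reconfiguration sequence from $S_s$ to $S_t$ in $G$. By Proposition~\ref{prop-planar-subgraph} there is a $j$ for which $V(\alpha) \subseteq V(G_j)$, so no vertex of $D^*_j$ is ever touched. I first check that for this $j$ the algorithm does not discard the instance: since $S_s \Delta S_t \subseteq V(\alpha)$, no symmetric-difference vertex lies in $D^*_j$; and any $v \in V(G_j)$ with a neighbor in $B^*_j$ must cover that edge from both $S_s$ and $S_t$ (as $B^*_j \cap (S_s \cup S_t) = \emptyset$), placing $v$ in $S^*_{s,j} \cap S^*_{t,j}$. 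Letting $U$ denote the set of star centers, I construct $\alpha^*$ by mimicking $\alpha$ step-by-step; the $i$-th intermediate cover is $S^*_i = (S_i \setminus A^*_j) \cup U$. Size equality $|S^*_i| = |S_i|$ is immediate because $A^*_j \subseteq S_i$ (untouched common vertices) and $|U| = |A^*_j|$. Vertex-cover validity follows edge-by-edge: star edges are covered by $U$; edges of the line-15 gadgets are covered because their base vertex $v \in S_s \cap S_t$ must remain in every cover along $\alpha$ (it has a neighbor in $B^*_j$, which is never added, so removing $v$ would leave that edge uncovered); and every other edge of $G^*_j$ already appears in $G$ with both endpoints in $V(G_j)$, hence is covered by $S_i \setminus A^*_j$.

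For the reverse direction, suppose $\alpha^*$ is a length-$\ell$ reconfiguration sequence from $S^*_{s,j}$ to $S^*_{t,j}$ in $G^*_j$. The crux is that the gadgets pin certain vertices inside every intermediate cover: a star center $u \in U$ has $\ell+1$ degree-one leaves, none of which start in $S^*_{s,j}$, so removing $u$ at any step would force $\ell+1$ leaf-additions within a total of $\ell$ operations; the same argument pins every vertex equipped with the line-15 gadget. Consequently every operation of $\alpha^*$ touches a vertex of $V(G_j) \subseteq V(G)$. I then define $\alpha$ by replaying the same sequence on $G$ starting from $S_s$; the $i$-th intermediate cover becomes $S_i = (S^*_i \setminus U) \cup A^*_j$, with $|S_i| = |S^*_i|$. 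For vertex-cover validity in $G$, I partition $E(G)$: edges with both endpoints in $V(G_j)$ are covered by $S^*_i \setminus U \subseteq S_i$ (the cover of the same edge in $G^*_j$ uses only non-gadget vertices); edges incident to $A^*_j$ are covered by $A^*_j \subseteq S_i$; and edges incident to some $w \in B^*_j$ are covered by their other endpoint, which must lie in $S_s \cap S_t$ (otherwise $S_s$ or $S_t$ would not be a vertex cover of $G$) and is pinned by its line-15 gadget to remain in $S^*_i$, hence in $S_i$.

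The main obstacle is bookkeeping the role of the two gadget types as faithful surrogates for the deleted vertices. Everything hinges on the observations that every neighbor of $B^*_j$ lies in $S_s \cap S_t$ and is pinned by a line-15 gadget in $G^*_j$, while every vertex of $A^*_j$ is replaced by a star center in $U$ which is pinned by its own $\ell+1$ leaves. Once these pinning invariants are isolated, the two directions reduce to direct translations between the two families of intermediate sets $S_i$ and $S^*_i$, and the cardinality and vertex-cover checks become routine case analyses.
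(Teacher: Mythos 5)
Your forward direction is sound and follows the paper's route (it is in fact more explicit than the paper about why the chosen $j$ survives the discard checks on lines 5 and 9, and about the edge-by-edge verification that $S^*_i=(S_i\setminus A^*_j)\cup U$ is a vertex cover of $G^*_j$ of the right size). The reverse direction, however, has a genuine gap at the sentence ``Consequently every operation of $\alpha^*$ touches a vertex of $V(G_j)$.'' Your pinning argument shows only that star centers and line-15 base vertices are never \emph{removed}; it does not prevent $\alpha^*$ from \emph{adding} some of the degree-one gadget leaves (and removing them again later). Nothing in the definition of a reconfiguration sequence forbids such moves, and since we ask for a path of length exactly $\ell$, a witnessing sequence in $G^*_j$ may well pad itself with leaf additions and deletions whenever the capacity permits. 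For such a sequence ``replaying the same sequence on $G$'' is not defined: a step that adds a leaf touches no vertex of $G$, so the translated sequence either contains a do-nothing step or is shorter than $\ell$, and your sets $S_i=(S^*_i\setminus U)\cup A^*_j$ would contain gadget leaves that are not vertices of $G$.

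The paper closes exactly this hole. Since no gadget leaf belongs to $S^*_{s,j}$ or $S^*_{t,j}$, each leaf is touched an even number of times; deleting all such steps yields a valid sequence of length $\ell-x$ with $x$ even, which is then restored to length exactly $\ell$ by appending the last step and its reversal $x/2$ times (the capacity bound allows this because the set just before the last step was already a node of $R_{\textsc{min}}(G,k)$). With this repair, the remainder of your reverse direction --- the pinning invariants, the cardinality bookkeeping $|S_i|=|S^*_i|$, and the case analysis on edges incident to $A^*_j$ and $B^*_j$ --- goes through and coincides with the paper's proof.
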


\begin{proof}
For $(G, S_s, S_t, k, \ell)$ a yes-instance of \textsc{VC-R},
there exists a reconfiguration sequence $\alpha$ of length exactly $\ell$ from $S_s$ to $S_t$.
Then by Corollary~\ref{prop-planar-subgraph},
we know that for some fixed $j \in \{0, 1, \ldots, \left \lfloor{{n / (\ell + 1)}}\right \rfloor\}$
we have $V(\alpha) \subseteq V(G^*_j)$ and $V(\alpha) \cap N_G(B^*_j) = \emptyset$,
as otherwise $V(\alpha) \cap B^*_j \neq \emptyset$.
By our construction of $G^*_j$, the maximum capacity constraint is never violated.
Therefore, $\alpha$ is also a reconfiguration sequence from $S^*_{s,j}$ to $S^*_{t,j}$.

For the converse, suppose that $(G^*_j, S^*_{s,j}, S^*_{t,j}, k, \ell)$ is a yes-instance for some fixed
$j \in \{0, 1, \ldots, \left \lfloor{{n / (\ell + 1)}}\right \rfloor\}$ and let $\alpha$
denote the corresponding reconfiguration sequence from $S^*_{s,j}$ to $S^*_{t,j}$.
Since the maximum capacity constraint cannot be violated, we only need to make sure
that (i) no reconfiguration step in $\alpha$ leaves an uncovered edge in $G$ and that (ii) none of
the degree-one gadget vertices are in $V(\alpha)$.
For (ii), it is not hard to see that any such vertex must be touched an even number of
times and we can delete those reconfiguration steps to obtain a shorter reconfiguration sequence.
Moreover, any reconfiguration sequence of length $\ell - x$, where $x$ is even, can be transformed
into a reconfiguration sequence of length $\ell$ by a simple application of the last reconfiguration
step and its reversal $\frac{x}{2}$ times.
For (i), assume that $\alpha$ leaves an uncovered edge in $G$. By our construction of $G^*_j$,
such an edge must have one endpoint in $B^*_j$. But since we added $\ell + 1$ degree-one
neighbors to every vertex in the neighborhood of $B^*_j$, this is not possible.
\qed
\end{proof}

Theorem~\ref{th-vcr-planar} then follows by combining Proposition~\ref{prop-dual},
Lemma~\ref{lem-planar-tw}, Lemma~\ref{lem-vcr-planar}, Theorem~\ref{th-vcr-tw}, and
the fact that $tw(G^*_j) \leq 3\ell - 1$, for all
$j \in \{0, 1, \ldots, \left \lfloor{{n / (\ell + 1)}}\right \rfloor\}$.
\qed
\end{proof}

\subsection{Details omitted from Section~\ref{sec-octr-fvsr-tw}}
\subsubsection{Proof of Theorem~\ref{th-oct-tw}}
\begin{proof}
The proof of correctness proceeds very similarly as for VC-R, we only need to argue that the strengthened last condition for validity (which uses the additional information about bipartitions in an essential way) is now strong enough to carry through the main inductive proof.

\begin{lemma}\label{lem-subsig-valid-oct}
If an OCT-signature $\sig$ over $X$ is obtained from a valid OCT-signature by replacing all vertex steps not in $X$ by $\used$ or $\unused$ steps,
then $\sig$ is valid as well.
\end{lemma}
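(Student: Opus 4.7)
The plan is to closely follow the proof of Lemma~\ref{lem-subsig-valid}, extending it to handle the additional bipartition information carried by OCT-signatures. Let $\sig$ over $X$ be obtained from a valid OCT-signature $\sig'$ over some $X' \supseteq X$ by replacing every vertex step $\sig'[i] \in X' \setminus X$ with either $\used$ or $\unused$. I would first establish the fundamental identity $\sig(i, S_s \cap X) = \sig'(i, S_s \cap X') \cap X$ for all $i \leq \ell$: the replaced steps only add or remove vertices outside of $X$, so they have no effect on the state restricted to $X$, while the steps that are not replaced behave identically in both signatures. With this identity in hand, the first three conditions of validity for $\sig$ follow directly from the corresponding conditions for $\sig'$, exactly as in the proof of Lemma~\ref{lem-subsig-valid}.

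The only genuinely new content is the strengthened condition (4), which requires that for every $1 \leq i \leq \ell - 1$ the labels $\sig[i,v]$ with $v \in X \setminus \sig(i, S_s \cap X)$ bipartition $G[X \setminus \sig(i, S_s \cap X)]$. By construction, for every such $v$ the label $\sig[i,v]$ is inherited unchanged from $\sig'[i,v]$, which is well-defined since the identity above ensures $v \in X' \setminus \sig'(i, S_s \cap X')$. Because $\sig'$ is valid, the assignment $v \mapsto \sig'[i,v]$ gives a bipartition of $G[X' \setminus \sig'(i, S_s \cap X')]$. Since $X \setminus \sig(i, S_s \cap X) \subseteq X' \setminus \sig'(i, S_s \cap X')$ and the former induces a subgraph of the latter, any edge of $G[X \setminus \sig(i, S_s \cap X)]$ is also an edge of $G[X' \setminus \sig'(i, S_s \cap X')]$ and therefore has its endpoints assigned distinct labels; this is the required bipartition.

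I do not foresee any real obstacle in carrying this out: the argument is a direct translation of the VC-R version, relying on the observation that bipartiteness is preserved under passing to induced subgraphs (which in fact inherit a canonical restricted bipartition from any given one). All the conceptual work has already been absorbed into the definition of OCT-signature validity, where the bipartition labels were added precisely so that this kind of ``local'' inheritance lemma goes through unchanged and can serve, as in Lemma~\ref{lem-subsig-valid}, as the base ingredient for the forget and introduce steps of the main induction.
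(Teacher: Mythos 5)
Your proof is correct and follows essentially the same route as the paper's: establish the identity $\sig(i,S_s\cap X)=\sig'(i,S_s\cap X')\cap X$, inherit the first three validity conditions, and observe that the restriction of the bipartition of $G[X'\setminus\sig'(i,S_s\cap X')]$ to $X$ bipartitions the induced subgraph $G[X\setminus\sig(i,S_s\cap X)]$. Your additional remark that the retained entries $\sig[i,v]$ are well-defined (since $v\in X'\setminus\sig'(i,S_s\cap X')$) is a detail the paper leaves implicit, but it is the same argument.
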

\begin{proof}
Let $\sig$ be obtained from a valid OCT-signature $\sig'$ over $X'$ by replacing all vertex steps in $X'\setminus X$ by $\used$ or $\unused$ steps.
First note that $\sig(i,S_s \cap X)=\sig'(i,S_s \cap X')\cap X$.
The first three conditions of Definition~\ref{def-valid} follow immediately.
Moreover, if $G[X' \setminus S]$ has a bipartition $L,R$, then $L\cap X,R\cap X$ is a bipartition of $G[X \setminus (S\cap X)]$, hence the fourth condition also follows.
\qed
\end{proof}

\begin{lemma}\label{lem-oct-cover}
Let $G$ be a graph $S,X_1,X_2$ be subsets of $V(G)$ such that every edge of $G[X_1 \cup X_2]$ is contained in $G[X_1]$ or $G[X_2]$.
Let $L,R$ be a partition of $X_1 \cup X_2$.
If $L\cap X_1,R\cap X_1$ is a bipartition of $G[X_1\setminus S]$ and $L\cap X_2,R\cap X_2$ is a bipartition of $G[X_2\setminus S]$, then $L,R$ is a bipartition of $G[(X_1 \cup X_2) \setminus S]$.
\end{lemma}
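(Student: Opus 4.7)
The plan is to verify the two defining conditions for $L, R$ to constitute a bipartition of $G[(X_1\cup X_2)\setminus S]$: namely, that $L$ and $R$ (restricted to the vertex set in question) partition it, and that every edge has one endpoint in each side. The first is essentially bookkeeping: since $L,R$ partition $X_1\cup X_2$, the sets $L\cap ((X_1\cup X_2)\setminus S)$ and $R\cap ((X_1\cup X_2)\setminus S)$ partition $(X_1\cup X_2)\setminus S$. Moreover, the hypothesis that $L\cap X_i,R\cap X_i$ bipartitions $G[X_i\setminus S]$ forces $(L\cup R)\cap X_i = X_i\setminus S$, so both $L$ and $R$ are already disjoint from $S$ on $X_1\cup X_2$, and no intersection with $S$ need be trimmed away.

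The substantive content is the edge condition, and this is where the covering hypothesis on edges does the work. First I would fix an arbitrary edge $uv\in E(G[(X_1\cup X_2)\setminus S])$. By assumption, every edge of $G[X_1\cup X_2]$ is contained in $G[X_1]$ or $G[X_2]$, so $uv$ lies in $G[X_i]$ for some $i\in\{1,2\}$; in particular both $u$ and $v$ belong to $X_i$. Since also $u,v\notin S$, the edge $uv$ lies in $G[X_i\setminus S]$. Applying the bipartition hypothesis on $G[X_i\setminus S]$ then places one endpoint in $L\cap X_i$ and the other in $R\cap X_i$, hence one in $L$ and the other in $R$, as required.

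I do not expect any real obstacle: the statement is the bipartition analogue of Lemma~\ref{lem-vc-cover} (which handled the vertex-cover case in the same way), and it serves the same structural role in the inductive validity proof for OCT-signatures — namely, it allows us to glue together local bipartitions stored at the two children of a join node into a consistent bipartition on the union, under the standard tree-decomposition guarantee that each edge of the underlying graph appears inside some single bag. The only thing to be careful about is matching the convention used elsewhere in the paper for how a stored set $S$ interacts with the vertex set of the bipartitioned subgraph, which is resolved above by observing that the hypotheses already force $S\cap(X_1\cup X_2)=\emptyset$.
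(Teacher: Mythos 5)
Your proof is correct and follows essentially the same route as the paper's: fix an edge of $G[(X_1\cup X_2)\setminus S]$, use the covering hypothesis to place it inside some $G[X_i]$, observe that neither endpoint lies in $S$ so it is an edge of $G[X_i\setminus S]$, and apply the local bipartition hypothesis. The paper's proof consists of exactly this edge argument; your additional bookkeeping about how $L,R$ partition the vertex set is harmless and consistent with the lemma as stated.
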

\begin{proof}
Let $uv$ be an edge of $G[(X_1 \cup X_2)\setminus S]$.
Then it is contained in $G[X_i]$ for some $i\in\{1,2\}$.
It has no endpoint in $S \cap X_i$, hence it is an edge of $G[X_i\setminus S]$.
Thus one endpoint is in $L\cap X_i$ and the other in $R\cap X_i$.
In particular ever edge of $G[(X_1 \cup X_2)\setminus S]$ has one endpoint in $L$ and the other in $R$.
\qed
\end{proof}

\begin{corollary}\label{cor-oct-cover-valid}
Let $\sig,\sig_1,\sig_2$ be a OCT-signatures over $X,X_1,X_2$ respectively, such that $X=X_1\cup X_2$ and every edge of $G[X]$ is contained in $G[X_1]$ or $G[X_2]$.
Assume furthermore that for all $i\leq\ell$:
\begin{itemize}[nosep]
\item[] $\sig[i]=\sig_1[i]$ whenever $\sig[i]\in X_1$ or $\sig_1[i]\in X_1$,
\item[] $\sig[i]=\sig_2[i]$ whenever $\sig[i]\in X_2$ or $\sig_2[i]\in X_2$,
\item[] $\sig[i,v]=\sig_1[i,v]$ whenever $v\in X_1$ and $\sig[i,v]$ is defined and
\item[] $\sig[i,v]=\sig_2[i,v]$ whenever $v\in X_2$ and $\sig[i,v]$ is defined.
\end{itemize}
If $\sig_1$ and $\sig_2$ are valid, then so is $\sig$.
\end{corollary}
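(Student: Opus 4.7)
The plan is to mirror the proof of Corollary~\ref{cor-vc-cover-valid}, extending it to handle the new bipartition information tracked by an OCT-signature. The workhorse identity is that for each $j\in\{1,2\}$ and each $i\leq\ell$,
\[\sig_j(i, S_s \cap X_j) = \sig(i, S_s \cap X) \cap X_j;\]
indeed, any step that modifies a vertex of $X_j$ must appear identically in $\sig$ and $\sig_j$ by the step-agreement hypotheses, and no other vertex of $X_j$ is touched in either signature.

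Given this identity, the first three conditions of Definition~\ref{def-valid} transfer essentially as in the proof of Corollary~\ref{cor-vc-cover-valid}: if $\sig[i]\in X$ then $\sig[i]\in X_j$ for some $j$ and the step-agreement forces $\sig_j[i]=\sig[i]$, so conditions (1) and (2) for $\sig$ follow from the corresponding conditions for $\sig_j$; for condition (3), using $X=X_1\cup X_2$,
\[\sig(\ell, S_s \cap X) = \bigcup_{j\in\{1,2\}} \sig_j(\ell, S_s \cap X_j) = \bigcup_{j\in\{1,2\}} (S_t\cap X_j) = S_t\cap X.\]

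The crux is the strengthened condition (4). For any $i\in\{1,\dots,\ell-1\}$, define $L=\{v\in X \mid \sig[i,v]=\ttleft\}$ and $R=\{v\in X \mid \sig[i,v]=\ttright\}$, which by Definition~\ref{def-oct-sign} partition $X\setminus\sig(i,S_s\cap X)$. The two bipartition-agreement hypotheses give $L\cap X_j=\{v\in X_j \mid \sig_j[i,v]=\ttleft\}$ and symmetrically for $R$, which by validity of $\sig_j$ forms a bipartition of $G[X_j\setminus\sig_j(i,S_s\cap X_j)]$. Since every edge of $G[X]$ lies in $G[X_1]$ or $G[X_2]$, Lemma~\ref{lem-oct-cover} applied with $S=\sig(i,S_s\cap X)$ then delivers condition (4): $L,R$ is a bipartition of $G[X\setminus\sig(i,S_s\cap X)]$. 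The main obstacle is precisely making the two bipartitions fit together on $X_1\cap X_2$; that is exactly what the two new hypotheses on $\sig[i,v]$ enforce, and it is the same agreement that the refined compatibility check in the modified join operation will impose, so the corollary will slot cleanly into the OCT-R analogue of Lemma~\ref{lem-dp-induction}.
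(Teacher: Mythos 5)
Your proof is correct and follows essentially the same route as the paper's: the identity $\sig_j(i,S_s\cap X_j)=\sig(i,S_s\cap X)\cap X_j$, the transfer of the first three validity conditions as in the \textsc{VC-R} case, and an application of Lemma~\ref{lem-oct-cover} with $S=\sig(i,S_s\cap X)$ and $L,R$ read off from the $\sig[i,v]$ entries. The only difference is that you spell out the domain-matching of the bipartition entries, which the paper leaves implicit.
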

\begin{proof}
The assumption implies that $\tau$ and $\tau_1$ agree over all changes within $X_1$, that is, $\tau_1(i,S_s \cap X_1) = \tau(i,S_s \cap X) \cap X_1$ (and similarly for $\tau_2$).
The first three conditions of validity for $\sig$ follow as for VC-R.
For the last condition, it suffices to use Lemma~\ref{lem-oct-cover} for $S=\tau(i,S_s \cap X), L=\{v \mid \sig[i,v]=\ttleft\}, R=\{v \mid \sig[i,v]=\ttright\}$.
\qed
\end{proof}

The following lemma is proved by induction exactly as for VC-R, only with Lemma~\ref{lem-subsig-valid-oct} and Corollary~\ref{cor-oct-cover-valid} used when validity needs to be argued. 

\begin{lemma}\label{lem-dp-induction-oct}
For $i \in V(T)$ and an OCT-signature $\sig$ over $\bag_i$, $\sig\in \table_i$
if and only if
$\sig$ can be extended to an OCT-signature over $V_i$ that is valid.
\end{lemma}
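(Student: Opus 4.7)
The plan is to mirror the inductive argument used for Lemma~\ref{lem-dp-induction} (the VC-R version), with the two key validity lemmas replaced by their OCT analogues, Lemma~\ref{lem-subsig-valid-oct} and Corollary~\ref{cor-oct-cover-valid}. Concretely, I would prove the statement by induction on the nice tree decomposition, processing $i \in V(T)$ in post-order: assuming the equivalence at every child of $i$, establish it at $i$. First, I would observe that all entries placed into any table $\table_i$ are obtained via the (OCT-adapted) introduce, forget, and join operations, and a short preamble would verify that each of these operations preserves OCT-validity (the introduce operation does so by explicit check, forget by deleting the $\sig[i,v]$ entries for the forgotten $v$ together with the identity $\sig'(i,S_s \cap (X\setminus\{v\})) = \sig(i, S_s\cap X)\setminus\{v\}$, and join by its compatibility condition which forces the bipartition entries to agree).

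Next I would split into the four node types. For a \textbf{leaf} $i$ with $X_i = \{v\}$, we have $V_i = X_i$, so extendability to a valid OCT-signature over $V_i$ coincides with validity of $\sig$ itself over $\{v\}$, which is precisely what the algorithm stores. For a \textbf{forget} node $i$ with child $j$ and $X_i = X_j \setminus \{v\}$, one direction lifts an extension $\extsig$ of $\sig \in \table_i$ to an extension of some $\sig_j \in \table_j$ with $forget(\sig_j,v)=\sig$ (those $v$-steps in $\extsig$ are reintroduced into the child-level signature); the other direction reuses the same replacement and applies the inductive hypothesis at $j$. The bipartition entries $\sig[i,v']$ for $v' \ne v$ are untouched by the forget operation, and entries at $v$ are simply absent at $i$, so nothing new needs to be checked on the bipartition side.

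For an \textbf{introduce} node $i$ with child $j$ and $X_i = X_j \cup \{v\}$, the forward direction starts from $\sig \in \table_i$, takes the corresponding $\sig_j \in \table_j$ with $\sig \in introduce(\sig_j,v)$, lifts $\sig_j$ to a valid extension $\extsig_j$ over $V_j$ by induction, and then defines $\extsig$ over $V_i = V_j \cup \{v\}$ by adding exactly the $v$-steps and exactly the $\sig[i,v]$ bipartition entries that $\sig$ prescribes. Validity of $\extsig$ then follows from Corollary~\ref{cor-oct-cover-valid} applied with $X_1 = V_j$, $X_2 = X_i$: every edge of $G[V_i]$ is either already in $G[V_j]$ or, by the tree-decomposition property, lies in $G[X_i]$, and the local bipartitions of $\extsig_j$ and $\sig$ agree on the overlap $X_j$ by construction. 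The reverse direction starts from a valid extension $\extsig$ over $V_i$, strips the $v$-steps (replacing them with $\unused$) to obtain $\extsig_j$ over $V_j$ and drops the $\sig[i,v]$ entries; Lemma~\ref{lem-subsig-valid-oct} ensures $\extsig_j$ remains valid, induction gives the corresponding $\sig_j \in \table_j$, and the explicit validity check inside the introduce operation places $\sig \in \table_i$.

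For a \textbf{join} node $i$ with children $j,h$ and $X_i = X_j = X_h$, the forward direction takes compatible $\sig_j \in \table_j$, $\sig_h \in \table_h$ with $\sig = join(\sig_j,\sig_h)$, lifts each to valid extensions $\extsig_j, \extsig_h$ by induction, and combines them into $\extsig$ over $V_i = V_j \cup V_h$ by applying both sets of replacements (the index sets $I_j, I_h$ of $\used$-steps are disjoint, which is exactly what compatibility and Definition~\ref{def-join} guarantee); the bipartition entries agree on $X_i$ by the strengthened compatibility condition. Corollary~\ref{cor-oct-cover-valid}, applied to $X_1 = V_j$, $X_2 = V_h$, then yields validity of $\extsig$ — using the tree-decomposition property that every edge of $G[V_i]$ lies entirely in $G[V_j]$ or $G[V_h]$. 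Conversely, starting from a valid extension $\extsig$ over $V_i$, I would split its vertex steps according to whether they use a vertex of $V_j \setminus X_i$ or $V_h \setminus X_i$, producing $\extsig_j, \extsig_h$ by replacing the other side's steps with $\unused$; Lemma~\ref{lem-subsig-valid-oct} gives validity of both, induction produces $\sig_j \in \table_j$, $\sig_h \in \table_h$, and inspection shows they are compatible with $join(\sig_j,\sig_h) = \sig$.

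The main subtle point — and the only real difference from the VC-R proof — is verifying that the bipartition coordinates propagate correctly through the join and introduce operations. The compatibility clause I would emphasize requires $\sig_j[i,v] = \sig_h[i,v]$ whenever both are defined (which is exactly on $v \in X_i$), so the combined signature inherits a well-defined bipartition on $X_i$ at each step; Corollary~\ref{cor-oct-cover-valid} is precisely what turns this local agreement into a global bipartition of $G[V_i \setminus \sig(i,S_s\cap V_i)]$. Once this is in hand, the bookkeeping is essentially identical to the VC-R case and the induction closes.
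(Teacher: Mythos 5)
Your proof is correct and takes essentially the same approach as the paper, which disposes of this lemma in one line by saying it ``is proved by induction exactly as for VC-R, only with Lemma~\ref{lem-subsig-valid-oct} and Corollary~\ref{cor-oct-cover-valid} used when validity needs to be argued.'' Your write-up actually supplies more node-by-node detail than the paper does, and the point you single out as the only real novelty---that the compatibility requirement on the $\sig[i,v]$ entries is what lets Corollary~\ref{cor-oct-cover-valid} turn local agreement of bipartitions into a global one---is precisely the intended content of the paper's remark.
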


The accepting condition is unchanged and its correctness follows from Lemma~\ref{lem-dp-induction-oct} the same way.
It only remains to consider the running time.
The number of possible OCT-signatures is $(t+3)^\ell 2^{t(\ell-1)}$ (instead of the $(t+3)^\ell$ for VC-R).
In the join operation, we required the new $\sig[i,v]$ entries to be equal and thus the running time is again $2^\ell$ times the number of possible OCT-signatures.
In the forget operation the algorithm only does a polynomial number of calculations for each of the OCT-signatures.
In the introduce operation, for each of the OCT-signatures we consider in the worst case $2^\ell$ possible subsets of $\unused$ steps and $2^\ell$ possible assignments of $\ttleft$ or $\ttright$ to new $\sig[i,v]$ entries.
The total running time is thus $\Ohstar(4^\ell (t+3)^\ell 2^{t\ell})$.

Combining the same complementing technique we used for \textsc{VC-R}
and \textsc{IS-R} with Proposition~\ref{prop-dual}, the result for
\textsc{IBS-R} follows.
\qed
\end{proof}

\end{document}